\newcommand {\mytitle} {Communication Complexity of Common Randomness Generation with Isotropic States}
\theoremstyle{plain}
\newtheorem{theorem}{Theorem}[section]
\newtheorem{lemma}[theorem]{Lemma}
\newtheorem{cor}[theorem]{Corollary}
\theoremstyle{definition}
\newtheorem{definition}[theorem]{Definition}
\newtheorem{remark}[theorem]{Remark}
\newcommand {\minusspace} {\: \! \!}
\newcommand {\Fn} [2] {\ensuremath{ #1 \minusspace \Br{ #2 } }}
\newcommand {\set} [1] {\ensuremath{ \left\lbrace #1 \right\rbrace }}
\newcommand{\normthree}[1]{{\left\vert\kern-0.25ex\left\vert\kern-0.25ex\left\vert #1 \right\vert\kern-0.25ex\right\vert\kern-0.25ex\right\vert}}
\newcommand {\Br} [1] {\ensuremath{ \left[ #1 \right] }}
\newcommand {\norm} [1] {\ensuremath{ \left\| #1 \right\| }}
\newcommand {\normsub} [2] {\ensuremath{ \norm{#1}_{#2} }}
\newcommand {\onenorm} [1] {\normsub{#1}{1}}
\newcommand {\abs} [1] {\ensuremath{ \left| #1 \right| }}
\newcommand {\ket} [1] {\ensuremath{ \left| #1 \right\rangle }}
\newcommand {\ketbratwo} [2] {\ensuremath{ \left| #1 \middle\rangle \middle\langle #2 \right| }}
\newcommand {\ketbra} [1] {\ketbratwo{#1}{#1}}
\newcommand {\defeq} {\ensuremath{ \stackrel{\mathrm{def}}{=} }}
\newcommand {\prob} [1] {\Fn{\Pr}{#1}}
\newcommand {\Tr} {\ensuremath{ \mathrm{Tr} }}
\newcommand {\id} {\ensuremath{\mathds{1}}}
\newcommand {\email} [1] {\href{mailto:#1}{\texttt{#1}}}
\newcommand {\authorblock} [3] {
	\begin{minipage}[t]{0.3\linewidth}
		\centering
		{#1}\\[0.8ex]
		{\footnotesize {#2}\\[-0.7ex]
		\email{#3}}
	\end{minipage}\vspace{1ex}
}
\tikzset{meter/.append style={draw, inner sep=10, rectangle, font=\vphantom{A}, minimum width=30, line width=.8,
 path picture={\draw[black] ([shift={(.1,.3)}]path picture bounding box.south west) to[bend left=50] ([shift={(-.1,.3)}]path picture bounding box.south east);\draw[black,-latex] ([shift={(0,.1)}]path picture bounding box.south) -- ([shift={(.3,-.1)}]path picture bounding box.north);}}}
\newcommand {\Density} {\ensuremath{\mathcal{D}}}
\newcommand {\PSD} {\ensuremath{\operatorname{Pos}}}
\newcommand {\Herm} {\ensuremath{\mathcal{H}}}
\newcommand {\Matrix} {\ensuremath{\mathcal{M}}}
\newcommand {\NoiseOp} {\ensuremath{\Delta_\rho}}
\newcommand {\Noise} [1] {\ensuremath{\Delta_\rho\left(#1\right)}}
\newcommand {\NoiseN} [1] {\ensuremath{\Delta_\rho^{\otimes n}\left(#1\right)}}
\newcommand {\NoiseHalf} [1] {\ensuremath{\Delta_{\sqrt{\rho}}\left(#1\right)}}
\newcommand {\NoiseHalfOp} {\ensuremath{\Delta_{\sqrt{\rho}}}}
\newcommand {\NEPR} {\ensuremath{\Phi_\rho}}
\newcommand {\NEPRN} {\ensuremath{\Phi_\rho^{\otimes n}}}
\newcommand {\EPRN} {\ensuremath{\Phi^{\otimes n}}}
\newcommand {\Hmin}[1] {\ensuremath{H_{\min}\left(#1\right)}}
\newcommand {\CR} {CR}
\begin{document}

\begin{titlepage}
\title{\textbf{\mytitle}\\[2ex]}

\newcommand {\Penghui} {Penghui Yao}
\newcommand {\Yangjing} {Yangjing Dong}
\newcommand {\NJUAFF} {State Key Laboratory for Novel Software Technology, Nanjing University, P.R. China}
\newcommand {\HefeiAFF} {Hefei National Laboratory, P.R. China}
\author{
	\authorblock{\Yangjing}{\NJUAFF}{DZ21330008@smail.nju.edu.cn}
	\authorblock{\Penghui}{\NJUAFF, \HefeiAFF}{phyao1985@gmail.com}
}

\clearpage\maketitle
\thispagestyle{empty}

\begin{abstract}
This paper addresses the problem of generating a common random string with min-entropy k using an unlimited supply of noisy EPR pairs or quantum isotropic states,
with minimal communication between Alice and Bob.
The paper considers two communication models -- one-way classical communication and one-way quantum communication,
and derives upper bounds on the optimal common randomness rate for both models.
We show that in the case of classical communication,
quantum isotropic states have no advantage over noisy classical correlation~\cite{guruswami2016tight}.
In the case of quantum communication, we demonstrate that the common randomness rate can be increased by using superdense coding on quantum isotropic states.
We also prove an upper bound on the optimal common randomness rate achievable by using one-way quantum communication.
As an application, our result yields upper bounds on the classical capacity of the noiseless quantum channel assisted by noisy entanglement\cite{horodecki2001classical}.

\end{abstract}
\end{titlepage}

\section{Introduction}

The problem of generating \textit{Common randomness} (\CR) is a fundamental primitive in information theory and distributed computing.
In the setting of CR generation,
multiple separated parties share correlated resources and their objective is to agree on a common random variable with high entropy while communicating as little as possible.
CR generation was raised in the seminal works of Maurer~\cite{maurer1993secret} and Ahlswede and Csisz\'ar~\cite{ahlswede1998common}.
There is a rich history of subsequent work (see, for instance~\cite{8863950} and the references within).

In quantum mechanics, one of the most appealing features is the strong correlation
between two non-interacting particles which we call \textit{quantum entanglement}.
Such correlations could occur when two particles have interacted in the past and have been separated.
A natural question is the advantages and limits of CR generation with quantum entanglement.
The study of CR generation with the assistance of quantum entanglement,
called \textit{entanglement-assisted CR generation},
was introduced by Devetak and Winter \cite{devetak2004distilling},
who obtained a single-letter formula for the optimal trade-off between the extracted common randomness and classical communication rate for the special case of
classical-quantum correlations when one-way classical communication from Alice to Bob is allowed.


It is well known that two qubits forming EPR states
$$\ket{\Phi} =\frac{1}{\sqrt{2}}\left(\ket{00} + \ket{11}\right)$$
have beyond-classical correlations.
Using EPR pairs, two spatially separated and thus non-communicating
parties may produce ensembles that are classically impossible
\cite{Einstein35, clauser1969proposed}.
However, quantum entanglement is fragile due to quantum decoherence.
In the real physical world, EPR pairs may inevitably interact with the noisy environment
and then become ``less correlated''.
This paper focuses on the scenario where the players share noisy EPR pairs affected by the depolarizing noise.
These noisy EPR pairs are referred as \textit{isotropic states},
and are mixtures of EPR pairs and completely mixed states.

CR generation is essentially impossible without communication\cite{ahlswede1998common}.
As a warm-up, we will get an upper bound on the probability of
Alice and Bob agreeing on a random string of min-entropy $k$.
when Alice and Bob share quantum isotropic states. This is exactly same as the upper bound for the case that  Alice and Bob share classical $\rho$-correlated strings~\cite{bogdanov2011extracting}.
This shows that in the communication-free version of CR generation,
we can not hope to get any advantage when the players share quantum isotropic states.

When communication is allowed, relating to the type of communication,
we will consider the following two models:
\begin{itemize}
  \item (Classical Communication) Alice and Bob share infinitely many copies of quantum isotropic states,
        and Alice sends classical messages to Bob.
        The rate of CR generated per bit of communication is considered.
  \item (Quantum Communication) Alice and Bob share infinitely many copies of quantum isotropic states,
        and Alice sends quantum messages to Bob.
        The rate of CR generated per qubit communication is considered.
\end{itemize}

In the case of classical communication,
we give lower bounds on the one-way classical communication
required to produce a CR with min-entropy at least $k$.
Again, this bound is exactly the same as in the classical case where
Alice and Bob share classical $\rho$-correlated strings,
which was derived in \cite{guruswami2016tight}.
Since they also gave matching one-way pure classical protocols,
this implies there is no quantum advantage in this case.

Things get different if Alice and Bob both share multiple copies of isotropic states and one-way quantum communication
from Alice to Bob is allowed:
For one part, as long as the noise of the shared entanglement is bounded below by some constant,
we can use super-dense coding like techniques to send qubits that contain classical information
at a rate strictly more than one \cite{hiroshima2001optimal, bowen2001classical}.
This can be used to optimize the classical protocols and easily shows some quantum advantage.
For the other part, we also obtain lower bounds on the qubit transmission needed for
Alice and Bob to agree on a CR with min-entropy at least $k$.
This shows that the quantum advantage is limited.

Our problem share similar nature of \textit{entanglement distillation},
where Alice and Bob's task is to transform their isotropic states into pure EPR pairs
\cite{bennett1996purification}.
Since one EPR pair can be easily converted to one bit of CR,
our task is easier than entanglement distillation.
It is well known that the channel capacity of a depolarizing channel
is exactly the one-way distillable entanglement of its Choi state \cite{bennett1996mixed}.
Likewise, any bound on entanglement-assisted CR per qubit rate induces a bound on the
\textit{entanglement-assisted classical capacity} of a channel
\cite{horodecki2001classical,bennett1999entanglement,bennett2002entanglement}.
The entanglement-assisted classical capacity of a quantum channel is the highest rate
at which classical information can be sent per use of this channel
when the sender and the receiver share quantum entanglement.
This rate is at least one for the noiseless identity channel,
and by using super-dense coding \cite{PhysRevLett.69.2881},
can reach two when the two parties share pure EPR states.
In our problem, the two parties share isotropic states,
so they can optimize the classical protocols when the noisy EPR-assisted channel capacity
is strictly above one.
And since we have proved that the classical protocols can not be ``optimized too much'',
in converse, this gives upper bounds on the noisy EPR-assisted channel capacity of the
noiseless channel.

\subsection{Our results}
We present our results in this section informally.
To begin with, we will warm-up in \cref{section-communication-free} by revisiting the proof in \cite{bogdanov2011extracting},
but this time since since Alice and Bob share quantum isotropic states,
we will describe their strategies using POVM measurements instead of purely classical functions.
We write $\NEPR$ for a quantum isotropic state with noise $\rho$, we get

\begin{theorem}[informal]
For $\rho\in[0, 1]$, suppose Alice and Bob share infinitely many copies of the state $\NEPR$.
They are not allowed to communicate, but any other local operations are allowed.
Then to produce a common random string of min-entropy at least $k$,
their success probability is at most
$$2^{-k\frac{1-\rho}{1+\rho}}.$$
\end{theorem}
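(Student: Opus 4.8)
The plan is to transcribe the classical hypercontractivity argument of \cite{bogdanov2011extracting} into operator language, replacing the noise operator $T_\rho$ on Boolean functions by the depolarizing channel $\Delta_\rho$ acting on $n$-qubit operators (the local dimension is $d=2$). Model Alice's and Bob's strategies as POVMs $\{A_a\}$ and $\{B_a\}$ on their $n$ shares, so that the probability they both output $a$ is $\Tr[(A_a\otimes B_a)\Phi_\rho^{\otimes n}]$ and the success probability is $P=\sum_a\Tr[(A_a\otimes B_a)\Phi_\rho^{\otimes n}]$. Since the reduced state of $\Phi_\rho^{\otimes n}$ on either side is maximally mixed, the outputs are governed by $\Pr[A=a]=d^{-n}\Tr A_a$ and $\Pr[B=a]=d^{-n}\Tr B_a$, so the requirement that the common string have min-entropy at least $k$ is precisely $\Pr[A=a]\le 2^{-k}$ and $\Pr[B=a]\le 2^{-k}$ for every $a$.

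First I would push all the noise onto the operators using the transpose trick for the maximally entangled state. Writing $\Phi_\rho^{\otimes n}=(\mathrm{id}\otimes\Delta_\rho^{\otimes n})(\Phi^{\otimes n})$ and using $\Tr[(X\otimes Y)\Phi^{\otimes n}]=d^{-n}\Tr[XY^{T}]$ together with the self-adjointness of $\Delta_\rho$ and the fact that it commutes with transposition, each term collapses to $\Tr[(A_a\otimes B_a)\Phi_\rho^{\otimes n}]=d^{-n}\Tr[A_a\,\Delta_\rho^{\otimes n}(B_a^{T})]$. The structural fact that drives the argument is that the depolarizing noise splits, $\Delta_\rho=\Delta_{\sqrt\rho}\circ\Delta_{\sqrt\rho}$, which lets me distribute half the noise to each party: with $\tilde A_a=\Delta_{\sqrt\rho}^{\otimes n}(A_a)$ and $\tilde B_a=\Delta_{\sqrt\rho}^{\otimes n}(B_a^{T})$, both positive semidefinite, self-adjointness gives $\Tr[A_a\,\Delta_\rho^{\otimes n}(B_a^{T})]=\Tr[\tilde A_a\tilde B_a]$, and Cauchy--Schwarz in the normalized Hilbert--Schmidt inner product yields $P\le\sum_a\|\tilde A_a\|_2\|\tilde B_a\|_2$. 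This is the exact operator analogue of the classical step $\langle f_a,T_\rho g_a\rangle=\langle T_{\sqrt\rho}f_a,T_{\sqrt\rho}g_a\rangle\le\|T_{\sqrt\rho}f_a\|_2\|T_{\sqrt\rho}g_a\|_2$.

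The main input, and the step I expect to be the real obstacle, is a noncommutative hypercontractive inequality for the depolarizing semigroup: in normalized Schatten norms, $\|\Delta_{\sqrt\rho}^{\otimes n}(X)\|_2\le\|X\|_{1+\rho}$. This is the quantum Bonami--Beckner bound (Pauli eigenvalue $\sqrt\rho$ matching the exponent $1+\rho$, since $1+(\sqrt\rho)^2=1+\rho$), and its tensorization to $n$ copies is genuinely nontrivial: unlike in the commutative world, quantum hypercontractivity does not tensorize for free, and this $(1+\rho)\to 2$ estimate is where quantum mechanics actually enters the proof. Granting it, the positivity of POVM elements finishes the per-$a$ estimate: $0\preceq A_a\preceq I$ forces $A_a^{1+\rho}\preceq A_a$, hence $\|A_a\|_{1+\rho}=(d^{-n}\Tr A_a^{1+\rho})^{1/(1+\rho)}\le(d^{-n}\Tr A_a)^{1/(1+\rho)}=\Pr[A=a]^{1/(1+\rho)}$, and likewise for $B$ since transposition preserves singular values. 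This is the operator version of the elementary fact that the $(1+\rho)$-norm of an indicator equals its measure raised to the power $1/(1+\rho)$.

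Combining the last two paragraphs gives $P\le\sum_a\Pr[A=a]^{1/(1+\rho)}\Pr[B=a]^{1/(1+\rho)}$, and the endgame is the same short computation as in the classical proof. Splitting the exponent as $\tfrac{1}{1+\rho}=\tfrac12+\tfrac{1-\rho}{2(1+\rho)}$ and applying the min-entropy bounds to the second factor gives $\Pr[A=a]^{1/(1+\rho)}\Pr[B=a]^{1/(1+\rho)}\le 2^{-k\frac{1-\rho}{1+\rho}}\sqrt{\Pr[A=a]\Pr[B=a]}$, and a final Cauchy--Schwarz over $a$, using $\sum_a\sqrt{\Pr[A=a]\Pr[B=a]}\le 1$, produces $P\le 2^{-k\frac{1-\rho}{1+\rho}}$. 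Everything outside the hypercontractive inequality is a faithful operator translation of \cite{bogdanov2011extracting}; the one place I would have to be careful is establishing, or correctly invoking, the tensorized depolarizing hypercontractivity.
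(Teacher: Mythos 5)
Your proposal is correct and follows essentially the same route as the paper: model the strategies as POVMs, split the depolarizing noise as $\Delta_\rho = \Delta_{\sqrt{\rho}}\circ\Delta_{\sqrt{\rho}}$ and distribute it to both sides via the maximally entangled state, apply Cauchy--Schwarz, invoke King's quantum hypercontractivity for the $(1+\rho)\to 2$ estimate, and finish with $A_a^{1+\rho}\preceq A_a$ together with the min-entropy bound $2^{-n}\Tr A_a \le 2^{-k}$. The only differences are cosmetic (you apply Cauchy--Schwarz termwise and once more over $a$ at the end, while the paper applies it once globally; you also track the transpose explicitly, which the paper elides), and your identification of the tensorized hypercontractive inequality as the one genuinely nontrivial input matches the paper, which likewise simply cites King's theorem for it.
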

In \cref{section-classical-communication}, we consider the entanglement-assisted CR generation
problem with classical communication, and give lower bounds on the communication needed
to agree on a random string of min-entropy at least $k$:
\begin{theorem}[informal]
For $\rho\in[0, 1]$, suppose Alice and Bob share infinitely many copies of the state $\NEPR$,
and Alice is allowed to send classical information to Bob, then to produce a common random string
of min-entropy at least $k$,
Alice needs to send at least $(1-\rho^2)k+o(k)$ bits of classical information.
\end{theorem}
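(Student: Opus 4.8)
The plan is to recast an arbitrary one-way protocol in the bilinear language of the communication-free warm-up and then run the same hypercontractive estimate, the only new ingredient being that Bob's acceptance operators now carry total mass $2^{c}$ instead of $1$. A protocol in which Alice sends $c$ bits is specified by a POVM $\set{A_{r,m}}$ on Alice's $n$ copies (jointly producing a guess $r$ and a message $m$, so $\sum_{r,m}A_{r,m}=\id$) together with, for every message $m$, a decoding POVM $\set{B_{r,m}}_{r}$ on Bob's side ($\sum_{r}B_{r,m}=\id$ for each $m$). Writing $\STr$ for the trace normalised so that $\STr\br{\id}=1$, the agreement probability is
$$\sum_{r,m}\Tr\Br{\br{A_{r,m}\otimes B_{r,m}}\NEPRN}=\sum_{r,m}\STr\Br{\conjugate{A_{r,m}}\,\NoiseN{B_{r,m}}},$$
where the second equality is the identity relating $\NEPR$ to the noise operator $\NoiseOp$ that already drives the communication-free bound (conjugation is the harmless transpose coming from the EPR pair). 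Setting $a_{r,m}\defeq\STr\br{A_{r,m}}$ and $b_{r,m}\defeq\STr\br{B_{r,m}}$, the constraint $\Hmin{R}\ge k$ reads $\sum_{m}a_{r,m}=\prob{R=r}\le 2^{-k}$, while $\sum_{r,m}a_{r,m}=1$ and, crucially, $\sum_{r}b_{r,m}=1$ for each of the $2^{c}$ messages, so $\sum_{r,m}b_{r,m}=2^{c}$.

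The analytic heart is a two-function hypercontractive inequality for the isotropic noise operator: factorising $\NoiseOp$ through $\NoiseHalfOp$ and invoking hypercontractivity of the resulting factors, one obtains $\STr\Br{\conjugate{A}\,\Noise{B}}\le\normsub{A}{p}\normsub{B}{q}$ for all positive $A,B$ whenever $\br{p-1}\br{q-1}\ge\rho^{2}$, the norms being with respect to $\STr$. Applying this termwise and bounding $\normsub{A_{r,m}}{p}\le\STr\br{A_{r,m}}^{1/p}=a_{r,m}^{1/p}$ and $\normsub{B_{r,m}}{q}\le b_{r,m}^{1/q}$ (valid since $0\le A_{r,m},B_{r,m}\le\id$ and $p,q\ge 1$) collapses the whole expression to the scalar quantity $\sum_{r,m}a_{r,m}^{1/p}\,b_{r,m}^{1/q}$ --- exactly the object one meets classically after replacing $\NEPR$ by a $\rho$-correlated source. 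This is where the ``no quantum advantage'' phenomenon is forced: the sharp threshold $\br{p-1}\br{q-1}\ge\rho^{2}$ is precisely the classical hypercontractive threshold at correlation $\rho$.

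It then remains to optimise the scalar bound, which I would do with a free Hölder exponent. Splitting $\sum a^{1/p}b^{1/q}$ so that the $b$-part is controlled by its total mass $2^{c}$ (using $b_{r,m}\le 1$) and the $a$-part by the flatness estimate $\sum a_{r,m}^{\alpha}\le\br{2^{-k}}^{\alpha-1}$ (from $a_{r,m}\le 2^{-k}$ and $\sum a_{r,m}=1$) yields a bound $2^{E(p,q)}$ with $E(p,q)=k\,\frac{p-1}{p}-\br{k-c}\,\frac1q$. Parametrising the curve by $p=1+t$, $q=1+\rho^{2}/t$ gives $E=k\,\frac{t}{1+t}-\br{k-c}\,\frac{t}{t+\rho^{2}}$, whose slope as $t\to 0^{+}$ is proportional to $k-\br{k-c}/\rho^{2}$; this changes sign exactly at $c=\br{1-\rho^{2}}k$. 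Hence for $c<\br{1-\rho^{2}}k$ the infimum over $t$ is a fixed negative multiple of $k$, forcing the agreement probability to $0$, so any protocol that succeeds with constant probability must use $c\ge\br{1-\rho^{2}}k+o(k)$ bits; taking $c=0$ recovers the communication-free statement.

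The step I expect to be the main obstacle is the quantum hypercontractive inequality itself: one needs the asymmetric, two-function form $\br{p\neq q}$ for the isotropic noise operator with the sharp constant $\br{p-1}\br{q-1}\ge\rho^{2}$, obtained from an unequal factorisation of $\NoiseOp$ rather than the symmetric $\NoiseHalfOp\circ\NoiseHalfOp$, and one must verify that passing from the Schatten norms to the traces $\STr\br{A_{r,m}}$ and $\STr\br{B_{r,m}}$ loses nothing at leading order --- it is the combination of the correct normalisation $\STr\br{\id}=1$ and the eigenvalue bound $0\le A_{r,m},B_{r,m}\le\id$ that prevents a spurious $2^{\Theta(n)}$ factor from appearing. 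A secondary, routine point is the min-entropy bookkeeping when agreement is only approximate, that is, turning ``$\Hmin{R}\ge k$ with constant success probability'' into the clean constraints above up to $o(k)$ corrections, exactly as in the classical treatment.
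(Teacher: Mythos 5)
Your proposal is correct and takes essentially the same approach as the paper's proof of \cref{Theorem-ClassicalCommunicationLowerBound}: the same POVM formalization of a one-way protocol, the same EPR-to-depolarizing-channel reduction (\cref{lemma-otimes-EPR-to-times}), the same H\"{o}lder-plus-hypercontractivity estimate with the threshold $(p-1)(q-1)\ge\rho^2$ (the paper parametrizes this as $p-1=\rho^2(q-1)$ and works with the conjugate exponent $q^*$), the same min-entropy flatness bound $2^{-n}\Tr\left[P_{a,\pi}\right]\le2^{-k}$, and the same optimization over a free exponent. The remaining differences are organizational rather than substantive --- you scalarize termwise via a two-function inequality in the normalized trace, whereas the paper applies the block-matrix H\"{o}lder inequality (\cref{Theorem-Holder1}) to the operator sequences and carries explicit $2^{-n}$ factors --- and the ``main obstacle'' you flag is not one, since your asymmetric two-function inequality follows immediately from Schatten--H\"{o}lder combined with \cref{Theorem-hypercontractivity1}.
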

In \cref{section-quantum-communication}, we turn to the quantum communication model:
\begin{theorem}[informal]
For $\rho\in[0, 1]$, suppose Alice and Bob share infinitely many copies of the state $\NEPR$,
and Alice is allowed to send qubits to Bob, then to produce a common random string
of min-entropy at least $k$,
Alice needs to send at least $\frac{1-\rho^2}{1+\rho^2}k + o(k)$ qubits.
\end{theorem}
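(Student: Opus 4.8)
The plan is to establish the quantum lower bound by a potential-function converse that runs in parallel to the classical lower bound of $(1-\rho^2)k$, with the single-qubit cost taking the place of the single-bit cost. Concretely, I would track a correlation potential $\Phi$ of the global state held by Alice and Bob throughout the protocol, built from the symmetrised noise operator $\NoiseHalfOp$ so that the isotropic correlation sits symmetrically on the two sides. I need three properties of $\Phi$: (i) it does not increase under local operations performed by either party; (ii) if the players start from $\NEPRN$ and terminate with outputs $W_A=W_B$ of min-entropy at least $k$, then $\Phi_{\text{final}}\ge (1-\rho^2)k-o(k)$; and (iii) $\Phi$ grows by a controlled amount whenever Alice transmits part of her register to Bob. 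I would take $\Phi$ to reduce, on classical data, to the quantity underlying the classical converse, so that (i) and (ii) are the quantum liftings of what that argument already supplies — a classical message being a special case of a quantum one — with the communication-free success bound $2^{-k(1-\rho)/(1+\rho)}$ pinning down the terminal estimate in (ii). Verifying (i)--(ii) for quantum states and measurements needs some care but should be routine compared with the genuinely new ingredient, (iii).

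The heart of the proof is the per-qubit form of (iii): transmitting a single qubit increases $\Phi$ by at most $1+\rho^2$, as against an increase of at most $1$ for a single classical bit. This is superdense coding seen from the converse side. I would write Alice's encoding as a channel producing a one-qubit register $M$ with maximally mixed marginal, and bound the jump $\Phi(\text{after sending }M)-\Phi(\text{before})$ by a term of the form $\log\dim M$ reweighted by the spectrum of the isotropic resource. The constant $1+\rho^2$ is the signature of the isotropic state: with perfect EPR pairs ($\rho\to1$) a qubit is worth two bits, recovering the superdense-coding factor $2$; with no correlation ($\rho\to0$) a qubit is worth only one bit and the model collapses to the classical one. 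I expect the exact value $1+\rho^2$ to come out of a second-moment computation against $\NoiseHalfOp$ — the same spectral data that yields $1-\rho^2$ on the correlation side — capping the gain at $1+\rho^2$ rather than the ideal $2$.

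With the three properties in hand the theorem is immediate: if Alice sends $q$ qubits in total, iterating (iii) across the transmissions and using (i) at the purely local steps gives $\Phi_{\text{final}}\le (1+\rho^2)\,q+o(q)$, while (ii) forces $\Phi_{\text{final}}\ge (1-\rho^2)k-o(k)$; combining the two yields $q\ge \frac{1-\rho^2}{1+\rho^2}k-o(k)$, which is the claim. The $o(k)$ slack absorbs the many-copy ($n\to\infty$) asymptotics together with the smoothing needed to pass between min-entropy and the Rényi-type quantity underlying $\Phi$.

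The main obstacle I anticipate is precisely step (iii). In the classical case a transmitted bit can raise any reasonable correlation measure by at most its one bit of content, so the bound is transparent; a transmitted qubit, however, can be played against the pre-shared entanglement, and the naive estimate is the full superdense-coding value $2$. Showing that the noise in $\NEPR$ degrades this to exactly $1+\rho^2$ — and doing so for a single potential that simultaneously obeys the monotonicity (i) and the terminal bound (ii) — is the delicate point, and is where the quantum noise-operator machinery (the operators $\NoiseOp$ and $\NoiseHalfOp$) together with a hypercontractivity/spectral estimate for the isotropic state will have to carry the argument.
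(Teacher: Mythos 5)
Your proposal has a genuine gap at its core, and it is not the gap you flag as ``delicate'' --- it is earlier. The entire plan rests on a correlation potential $\Phi$ satisfying (i) monotonicity under local operations, (ii) a terminal lower bound $\ge(1-\rho^2)k$ on successful outputs, and (iii) a per-qubit increase of at most $1+\rho^2$; but you never define $\Phi$, and your justification for why (i)--(ii) should be ``routine'' is mistaken. You assert that $\Phi$ can be taken to ``reduce, on classical data, to the quantity underlying the classical converse,'' but the classical lower bound of Guruswami and Radhakrishnan is \emph{not} an amortized/potential argument: it is a direct bound on the agreement probability, obtained by conditioning on the transcript and applying hypercontractivity to the parties' strategies, with the min-entropy constraint entering at the end. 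There is no classical potential with properties (i)--(iii) to lift. The requirements on $\Phi$ are in fact severe: it must be $o(k)$ on $\NEPRN$ for \emph{every} $n$ (the shared states are free and unlimited), non-increasing under all local operations, yet at least $(1-\rho^2)k$ on any final state embodying $k$ bits of agreed min-entropy --- and it must do this even when agreement holds only with probability $2^{-\gamma k}$, where conditioning on success is itself problematic for quantum states. Moreover, property (iii) with the constant $1+\rho^2$ is essentially the statement that the classical capacity of a noiseless qubit channel assisted by $\NEPR$ is at most $1+\rho^2$; in the paper this capacity bound is a \emph{corollary} of the theorem you are trying to prove, not an ingredient, so your plan inverts the logical order and concentrates the full difficulty of the theorem into an unproven step. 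A further warning sign: an amortized argument of this shape would automatically handle interactive (two-way) protocols, whereas the paper explicitly states that the two-way quantum case remains open for technical reasons.

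For contrast, the paper's proof is direct and has no potential function. Alice's strategy is a channel $C(\varphi)=\sum_a\ketbra{a}\otimes C_a(\varphi)$ with completely positive sub-channels $C_a:\Matrix_{2^n}\to\Matrix_{2^t}$; the min-entropy requirement gives $\Tr\left[C_a(\id_{2^n}/2^n)\right]\le 2^{-k}$. The success probability $\sum_a\Tr\left[(C_a\otimes\id)(\NEPRN)\cdot Q_a\right]$ is bounded by the matrix H\"older inequality, the POVM constraint $\sum_a\Tr\left[Q_a^{q^*}\right]\le 2^{n+t}$, and then two key technical lemmas: a \emph{generalized} hypercontractivity inequality for $\NoiseOp^{\otimes n}\otimes\id_{2^m}$ (needed because the noise acts only on Bob's untouched half, written via $(C_a\otimes\id)(\NEPRN)=(C_a\otimes\NoiseOp)(\EPRN)$), and the spectral-norm bound $\norm{(C_a\otimes\id)(\EPRN)}\le 2^{-n+t}$, which is precisely where the qubit count $t$ enters and where the superdense-coding advantage is capped. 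Optimizing the H\"older/hypercontractivity exponents $q=1+\delta$, $p=1+\rho^2\delta$ then yields the $\frac{1-\rho^2}{1+\rho^2}$ rate. If you want to salvage your outline, the realistic path is to abandon the potential and carry out this direct estimate; the two lemmas just named are the concrete content that your step (iii) gestures at but does not supply.
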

As mentioned, this gives an upper bound on the entanglement-assisted classical capacity of quantum channels (\cref{definition-capacity}):
\begin{cor}
For $\rho\in[0, 1]$, the classical capacity of the noiseless identity channel
assisted with states $\NEPR$ is upper bounded by $1+\rho^2$.
\end{cor}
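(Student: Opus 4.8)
The plan is to derive the capacity upper bound as a corollary of the one-way quantum communication lower bound stated just above, by using the capacity-achieving coding scheme as the engine of a common randomness protocol and then invoking that lower bound. The first thing to notice is that the obvious reduction is lossy: if the entanglement-assisted capacity is $C$, then $q$ uses of the noiseless qubit channel transmit roughly $Cq$ reliable classical bits, and feeding a uniformly random message of that length through the channel produces shared randomness of min-entropy $\approx Cq$; the quantum lower bound would then only yield the weaker bound $C \le \frac{1+\rho^2}{1-\rho^2}$. To reach the sharper bound $1+\rho^2$ I would not use the transmitted bits as randomness directly, but rather as the communication budget of the optimal one-way classical common randomness protocol of \cite{guruswami2016tight}, which additionally exploits the shared correlations.

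Concretely, I would build the following one-way quantum protocol from a capacity-$C$ scheme. First, Alice and Bob measure a fresh batch of copies of $\NEPR$ in the computational basis; since the paper has already identified the measured isotropic state with the classical $\rho$-correlated source, this produces $\rho$-correlated strings of the required length at no communication cost. Second, Alice uses $q$ uses of the noiseless channel together with further copies of $\NEPR$ and the capacity-achieving (superdense-coding-type) scheme to deliver $\approx Cq$ classical bits to Bob with vanishing error. Third, they run the optimal classical protocol of \cite{guruswami2016tight} on the $\rho$-correlated strings, spending exactly these $\approx Cq$ transmitted bits as its one-way communication; by the matching classical achievability, this extracts a common string of min-entropy $k \approx \frac{Cq}{1-\rho^2}$, on which Alice and Bob agree with probability tending to $1$.

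This composite object is a legitimate one-way quantum common randomness protocol that spends $q$ qubits and produces min-entropy $k = \frac{Cq}{1-\rho^2}\br{1-o(1)}$. Applying the one-way quantum lower bound, which forces $q \ge \frac{1-\rho^2}{1+\rho^2}k + o(k)$, and substituting $k = \frac{Cq}{1-\rho^2}\br{1-o(1)}$ gives $q \ge \frac{C}{1+\rho^2}\,q + o(q)$, whence $C \le 1+\rho^2$ after dividing by $q$ and letting $q \to \infty$. The main obstacle I anticipate is bookkeeping the composition rigorously: I must ensure that the vanishing decoding error of the capacity scheme and the error of the classical extractor compose without corrupting the min-entropy guarantee or the agreement probability demanded by the lower bound, and that the consumed pools of $\NEPR$ copies (for correlation and for the superdense-coding scheme) are all accounted for within a single asymptotic rate. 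Verifying that the computational-basis measurement of $\NEPR$ yields precisely the $\rho$-correlated source assumed by \cite{guruswami2016tight}, with the same parameter $\rho$, is the other point that must be checked carefully before the classical achievability rate $\frac{1}{1-\rho^2}$ can be invoked.
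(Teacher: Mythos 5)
Your proposal is correct and takes essentially the same approach as the paper: the paper's proof likewise converts measured copies of $\NEPR$ into $\rho$-correlated strings, runs the optimal classical protocol of \cite{guruswami2016tight} with its $(1-\rho^2)k$ bits of communication compressed through the assumed capacity-achieving scheme, and invokes \cref{Theorem-quantum-lower-bound} to bound the capacity (phrased there as a contradiction rather than your direct rate calculation). Your additional remarks---that the naive reduction only gives $\frac{1+\rho^2}{1-\rho^2}$, and that the error composition and the identification of measured isotropic states with the $\rho$-correlated source need checking---are sound points that the paper's brief proof glosses over.
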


\subsection{Related work}

CR generation is a standard problem in classical information theory. It was first studied with secrecy requirements in the seminal work~\cite{maurer1993secret,ahlswede1993common}. Namely, the randomness should be independent of the messages exchanged between the players. Non-secure CR generation first appeared in~\cite{ahlswede1998common}, where the authors gave single-letter formulae for the optimal rates of CR generations in several communication models. Various agreement distillations have been proposed and investigated since then and it continuous to be a subject of active investigation. Readers may refer to a recent excellent survey~\cite{8863950} and the references therein.

Devetak and Winter initiated the study of CR generation with quantum resources in~\cite{devetak2004distilling}, where they gave single-letter formulae for both the classical-quantum correlations and the general quantum correlations
when the sender's measurement is restricted to tensor products.
A very recent work by Lami, Regula, Wang and Wilde~\cite{10161613} gave an efficiently computable upper bound
on the LOCC assisted distillable randomness of an arbitrary bipartite quantum state.
Nuradha and Wilde~\cite{nuradha2023fidelitybased} also used the fidelity-based smooth min-relative entropy
to upper bound the LOCC assisted distillable randomness.
These two results are strongly linked to the ideas in \cite{10005080}.
The multi-party entanglement-assisted CR generation was studied by Salek and Winter in a recent work~\cite{salek2021multi}, where they generalized a result in multi-terminal distributed losses source coding and secret key agreement known as communication for omniscience due to Csisz\'ar and Narayan~\cite{csiszar2004secrecy}.

Secret key agreement is a variation of CR generation with an additional security requirement on the
generated common randomness. It was introduced by the same set of pioneers~\cite{maurer1993secret,ahlswede1993common}. The quantum counterpart was also first studied by Devetak and Winter~\cite{devetak2005distillation}. It has received great attention from both physics community and computer science community. Readers may refer to a recent survey~\cite{murta2020quantum} and references therein.

Our work was inspired by a recent work~\cite{guruswami2016tight}, where Guruswami and Radhakrishnan gave exact bounds on the communication needed for Alice and Bob to agree
on a common random string of min-entropy $k$,
when Alice and Bob share classical random strings $X = \left(x_1, x_2, \dots\right)$
and $Y = \left(y_1, y_2, \dots\right)$ where each $(x_i, y_i)$ are $\rho$-correlated,
and two-way communication is allowed
(but the random string should only be determined by Alice's input $X$).In contrast to previous work, the shared correlation is considered free.
Thus only the communication to produce the random string is accounted for.
They proved that to agree on a $k$-bit string, $\Theta((1-\rho^2)k)$ bits of communication
is sufficient and needed if they share $\rho$-correlated boolean strings.

\subsection{Techniques}\label{subsection-techniques}
We use similar techniques as in the work in \cite{guruswami2016tight},
in which they relied on hypercontractivity inequalities
concerning $\rho$-correlated random variables or boolean functions.
To apply hypercontractivity inequalities to quantum states,
we need the results in \cite{king2014hypercontractivity},
which gives similar hypercontractivity inequalities to operators.
See \cref{section-hypercontractivity}.

The frameworks are similar for the communication lower bounds we get.
First, we give a general description of the possible strategies by the communicating parties.
In the entanglement-assisted classical communication model, these will be POVM measurements and
in the quantum communication model, Alice's strategy should be quantum channels.
Then we introduce a parameter $q$ to be fixed later
and apply H\"{o}lder's inequality with parameters $q$ and $q^*$
satisfying $\frac{1}{q}+\frac{1}{q^*}$.
This will enable us then to use hypercontractivity inequalities.
In the end, choosing the appropriate values for $q$ yields our lower bounds.

The difference to that in \cite{guruswami2016tight} is that,
in the pure classical setting,
both the shared correlation and the communication are classical,
so their values can be enumerated or ``conditioned upon''.
This facilitates computation because Alice and Bob's strategies can then be characterized by
deterministic functions,
and the agreement probability can then be analyzed in a case-by-case manner.
In our problem, however, Alice and Bob share quantum states as correlation.
And in the quantum communication case, even the communication can be quantum.
So Alice and Bob's strategies can only be described by quantum measurements or even quantum channels.
This requires us to treat their strategies as a whole.
As a result, we are only able to prove lower bounds in the one-way communication model.

Moreover, in the quantum communication model, we will need to upper bound the norm
of operators that are partially affected by the depolarizing channel.
So we will need to prove a slightly more general form
of the hypercontractivity inequality in \cite{king2014hypercontractivity}.

\section{Preliminaries}
Let $R$ be a random variable with distribution $\mu$, we define the min-entropy of $R$ to be
\begin{equation}
\Hmin{R}\defeq\min_{r\in R}\log\left(1/\mu(r)\right).
\end{equation}
We assume $\log(1/0)=+\infty$ in this definition.

Let $A, B$ be finite-dimensional Hilbert spaces.
We use $\Matrix(A, B)$ to denote the set of
linear operators mapping from $A$ to $B$.
If the dimensions of $A$ and $B$ are $n$ and $m$ respectively,
we may write $\Matrix(A, B) = \Matrix_{n, m}$.
Linear operators acting on the same space are written as $\Matrix(A) = \Matrix(A, A)$
and $\Matrix_{n} = \Matrix_{n, n}$ for simplicity.
We use $\id_A$ to represent the identity operator acting on $A$,
and $\id_n$ to represent the identity operator of dimension $n$.

Linear operators in $\Matrix_{n, m}$ can be represented by matrices of dimension $m\times n$.
For a matrix $A$, we let $\abs{A}\defeq\sqrt{A^\dagger A}$.
The Schatten $p$-norms are defined as
$$\normsub{A}{p}\defeq\left(\Tr\left[\abs{A}^p\right]\right)^{1/p}$$ for $p\ge 1$.
And for $p=\infty$, we define the spectral norm to be
$$\norm{A}\defeq\max\left\{\norm{Au}:\norm{u}\le1\right\},$$
which coincides with $\lim_{p\to\infty}\normsub{A}{p}$.
For the special case, $\normsub{A}{1}$ is referred to as the trace norm.

Schatten $p$-norms are non-increasing over $p$.
That is, let $1 \le p \le q$,
we have the following relation for a matrix $A\in\Matrix_{2^n}$
\begin{equation}\label{equation-normed-shatten-norm1}
  \normsub{A}{q} \le \normsub{A}{p}.
\end{equation}

Schatten p-norms admit H\"{o}lder's inequality\footnote{See e.g. \cite{watrous2018theory} equation 1.174.}:
Let $A, B$ be matrices and $p, q$ be positive real numbers satisfying $1/p+1/q=1$, we have
\begin{equation}
\Tr\left[AB\right]\le\normsub{A}{p}\normsub{B}{q}.
\end{equation}
The above inequality can be easily extended to two sequences of matrices:
\begin{lemma}[Matrix H\"{o}lder's inequality, \cite{shebrawi_albadawi_2013}, Theorem 2.6]\label{Theorem-Holder1}
  Let $A_i, B_i(i=1, 2, \dots, m)$ be two sequences of matrices
and $p, q$ be positive real numbers satisfying $1/p+1/q=1$. We have
$$\Tr\left[\sum_{i=1}^mA_iB_i\right]\le\left(\Tr\left[\sum_{i=1}^m\abs{A_i}^p\right]\right)^{1/p}\left(\Tr\left[\sum_{i=1}^m\abs{B_i}^q\right]\right)^{1/q}.$$

Specifically, if $A_i$ and $B_i$ are positive semidefinite for all $i$, then
$$\Tr\left[\sum_{i=1}^mA_iB_i\right]\le\left(\Tr\left[\sum_{i=1}^mA_i^p\right]\right)^{1/p}\left(\Tr\left[\sum_{i=1}^mB_i^q\right]\right)^{1/q}.$$
\end{lemma}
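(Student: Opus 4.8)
The plan is to deduce this sequence version directly from the single-matrix Hölder inequality $\Tr\left[AB\right]\le\normsub{A}{p}\normsub{B}{q}$ stated just above, by means of a block-diagonal embedding. First I would assemble the two families into direct sums
$$\tilde{A} \defeq \bigoplus_{i=1}^m A_i, \qquad \tilde{B} \defeq \bigoplus_{i=1}^m B_i,$$
regarded as block-diagonal matrices acting on the direct sum of the underlying spaces. The shapes are compatible precisely because each $\Tr\left[A_iB_i\right]$ is assumed to make sense, so $A_i$ and $B_i$ have transposed formats and $\tilde{A}\tilde{B}$ is square. The first structural point I would record is that block-diagonality is preserved under multiplication: $\tilde{A}\tilde{B} = \bigoplus_{i=1}^m A_iB_i$, and since the trace of a block-diagonal matrix is the sum of the traces of its blocks,
$$\Tr\left[\tilde{A}\tilde{B}\right] = \sum_{i=1}^m \Tr\left[A_iB_i\right] = \Tr\left[\sum_{i=1}^m A_iB_i\right].$$

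Next I would compute the two Schatten norms of the assembled matrices. Because $\tilde{A}^\dagger\tilde{A} = \bigoplus_{i=1}^m A_i^\dagger A_i$ is again block-diagonal, its principal square root acts blockwise, so $\abs{\tilde{A}} = \bigoplus_{i=1}^m \abs{A_i}$, and raising to the $p$-th power likewise respects the decomposition, giving $\abs{\tilde{A}}^p = \bigoplus_{i=1}^m \abs{A_i}^p$. Hence
$$\normsub{\tilde{A}}{p}^p = \Tr\left[\abs{\tilde{A}}^p\right] = \sum_{i=1}^m \Tr\left[\abs{A_i}^p\right] = \Tr\left[\sum_{i=1}^m \abs{A_i}^p\right],$$
and symmetrically $\normsub{\tilde{B}}{q}^q = \Tr\left[\sum_{i=1}^m \abs{B_i}^q\right]$. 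Applying the single-matrix Hölder inequality to the pair $\tilde{A},\tilde{B}$ and substituting these three identities yields the first claimed bound verbatim. The positive semidefinite special case is then immediate: if each $A_i\succeq 0$ then $\abs{A_i}=A_i$, so $\abs{A_i}^p = A_i^p$, and similarly for the $B_i$, which turns the first inequality into the second.

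The only step that genuinely requires justification — and the one I would treat as the \textbf{main (if minor) obstacle} — is the claim that the functional calculus commutes with the direct-sum structure, namely that $f\!\br{\bigoplus_{i} M_i} = \bigoplus_{i} f(M_i)$ for the maps $x\mapsto\sqrt{x}$ and $x\mapsto x^{p}$ applied to positive semidefinite blocks. For integer $p$ this is a one-line identity, but for general real $p\ge 1$ it rests on the fact that a block-diagonal positive operator has block-diagonal spectral projections; this follows by diagonalizing each block separately and reassembling the eigendecomposition of the whole. Once this is established, every remaining manipulation is routine bookkeeping about traces and products of block-diagonal matrices, so the argument closes cleanly.
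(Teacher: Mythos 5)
Your proof is correct and takes exactly the same route as the paper: the paper's proof is precisely the block-diagonal embedding $A=\bigoplus_{i=1}^m A_i$, $B=\bigoplus_{i=1}^m B_i$ followed by the single-matrix H\"{o}lder inequality. The only difference is that you spell out the (true, and worth noting) fact that the functional calculus $\abs{\cdot}^p$ acts blockwise on direct sums, which the paper leaves implicit.
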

\begin{proof}
  Let $A=\bigoplus_{i=1}^mA_i$ and $B=\bigoplus_{i=1}^mB_i$.
  By the H\"{o}lder's inequality for Schatten $p$-norm
$$\Tr\left[\sum_{i=1}^mA_iB_i\right]=\Tr\left[AB\right]\le\normsub{A}{p}\normsub{B}{q}=\left(\Tr\left[\sum_{i=1}^m\abs{A_i}^p\right]\right)^{1/p}\left(\Tr\left[\sum_{i=1}^m\abs{B_i}^q\right]\right)^{1/q}.$$
\end{proof}
We have the following inequality for Schatten $p$-norms undergoing partial trace:
\begin{lemma}[\cite{rastegin2012relations}, Proposition 1]
  Let $M\in\Matrix(A\times B)$ be a matrix and $p\ge1$, we have
  \begin{equation}\label{equation-norm-over-partial-trace}
    \normsub{\Tr_B\left(M\right)}{p} \le \dim\left(B\right)^{(p-1)/p}\cdot\normsub{M}{p}.
  \end{equation}
\end{lemma}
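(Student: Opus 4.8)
The plan is to pass through the duality of the Schatten norms. Writing $q$ for the conjugate exponent determined by $1/p+1/q=1$, recall that every operator $N\in\Matrix(A)$ admits the variational characterization
\begin{equation*}
\normsub{N}{p} = \sup\set{\abs{\Tr_A\Br{XN}} : X\in\Matrix(A),\ \normsub{X}{q}\le 1}.
\end{equation*}
Applying this with $N=\Tr_B(M)$ (which indeed lives in $\Matrix(A)$) reduces the task to bounding $\abs{\Tr_A\Br{X\,\Tr_B(M)}}$ uniformly over all test operators $X$ with $\normsub{X}{q}\le 1$.

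Next I would lift the trace over $A$ to a trace over the joint system by invoking the defining adjoint property of the partial trace: for every $X\in\Matrix(A)$,
\begin{equation*}
\Tr_A\Br{X\,\Tr_B(M)} = \Tr_{AB}\Br{(X\otimes\id_B)M}.
\end{equation*}
With the trace now on $A\otimes B$, I would apply H\"older's inequality (the single-term case of \cref{Theorem-Holder1}; the absolute value costs nothing since replacing $X$ by a unimodular phase multiple leaves $\normsub{X}{q}$ unchanged) to obtain
\begin{equation*}
\abs{\Tr_{AB}\Br{(X\otimes\id_B)M}} \le \normsub{X\otimes\id_B}{q}\,\normsub{M}{p}.
\end{equation*}

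Finally I would exploit multiplicativity of the Schatten norm under tensor products together with the fact that $\id_B$ has $\dim(B)$ singular values all equal to $1$, so that $\normsub{\id_B}{q}=\dim(B)^{1/q}$ and hence $\normsub{X\otimes\id_B}{q}=\normsub{X}{q}\,\dim(B)^{1/q}\le\dim(B)^{1/q}$ whenever $\normsub{X}{q}\le 1$. Taking the supremum over such $X$ and substituting $1/q=(p-1)/p$ gives $\normsub{\Tr_B(M)}{p}\le\dim(B)^{(p-1)/p}\normsub{M}{p}$, which is exactly the claim; the boundary case $p=1$, $q=\infty$ is consistent, since both the exponent on $\dim(B)$ and the inequality collapse to the contractivity of the partial trace in trace norm. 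The only genuinely non-routine ingredients are the adjoint identity for the partial trace and the tensor multiplicativity of $\normsub{\cdot}{q}$, so that is where I would concentrate the verification; the remainder is a direct chaining of duality and H\"older.
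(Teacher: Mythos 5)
Your proof is correct, but note that the paper itself never proves this lemma: it is imported verbatim as Proposition~1 of the cited reference \cite{rastegin2012relations}, so there is no in-paper argument to compare against. Your chain of reasoning is sound and complete on its own terms: the Schatten-norm duality $\normsub{N}{p}=\sup\set{\abs{\Tr[XN]}:\normsub{X}{q}\le 1}$, the adjoint identity $\Tr_A\Br{X\,\Tr_B(M)}=\Tr_{AB}\Br{(X\otimes\id_B)M}$, H\"older's inequality, and the multiplicativity $\normsub{X\otimes\id_B}{q}=\normsub{X}{q}\dim(B)^{1/q}$ are each standard and correctly applied, and the exponent bookkeeping $1/q=(p-1)/p$ lands exactly on the stated bound, with the $p=1$ endpoint degenerating to contractivity of the partial trace in trace norm as you observe. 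For comparison, a common alternative (closer in spirit to the literature on this inequality) avoids duality altogether: one uses the twirling identity $\frac{1}{d_B}\sum_k (\id_A\otimes U_k) M (\id_A\otimes U_k^\dagger) = \Tr_B(M)\otimes\frac{\id_B}{d_B}$, where $\set{U_k}$ are the $d_B^2$ Weyl operators, then applies the triangle inequality and unitary invariance to get $\normsub{\Tr_B(M)\otimes\id_B/d_B}{p}\le\normsub{M}{p}$, and finally factors out $\normsub{\id_B/d_B}{p}=d_B^{1/p-1}$. That route trades your duality step for an explicit group-averaging representation of the partial trace; your version is arguably more elementary in that it needs no special operator family, only H\"older and tensor multiplicativity.
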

Specifically, for $p=1$, we have 
\begin{equation}
\normsub{\Tr_B\left(M\right)}{1} \le\normsub{M}{1}.
\end{equation}
Thus, the trace norm does not increase under partial trace:

Let $q\ge1$ be a real number and $M$ a matrix, if $\norm{M}\le1$,
then we have the inequality
\begin{equation}\label{equation-spectral-norm-1-power-decrease}
  \normsub{M}{q}^q\le\onenorm{M}.
\end{equation}
This inequality can be generalized to matrices that have larger spectral norms:
\begin{lemma}\label{lemma-spectral-norm-power-decrease}
  Let $q \ge 1$ be a real number, $M$ be a matrix of dimension $m$,
then
\begin{equation}
\normsub{M}{q}^q\le\norm{M}^{q-1}\cdot\onenorm{M}.
\end{equation}
\end{lemma}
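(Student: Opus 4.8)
The plan is to reduce everything to the singular values of $M$ and exploit the fact that the spectral norm controls each singular value from above. First I would fix the singular value decomposition of $M$, letting $\sigma_1 \ge \sigma_2 \ge \dots \ge \sigma_m \ge 0$ denote its singular values. With this notation the three quantities appearing in the statement become purely elementary: the spectral norm is $\norm{M} = \sigma_1$, the trace norm is $\onenorm{M} = \sum_{i=1}^m \sigma_i$, and the Schatten $q$-norm satisfies $\normsub{M}{q}^q = \sum_{i=1}^m \sigma_i^q$. So the desired inequality is just the scalar statement $\sum_{i} \sigma_i^q \le \sigma_1^{q-1}\sum_i \sigma_i$.

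Next I would bound each term $\sigma_i^q$ individually. Writing $\sigma_i^q = \sigma_i^{q-1}\cdot \sigma_i$ and using that $\sigma_i \le \sigma_1 = \norm{M}$ together with the monotonicity of the map $t \mapsto t^{q-1}$ on $[0,\infty)$, I get $\sigma_i^{q-1} \le \norm{M}^{q-1}$, whence $\sigma_i^q \le \norm{M}^{q-1}\sigma_i$. Summing this estimate over $i$ and factoring out the constant $\norm{M}^{q-1}$ gives $\sum_i \sigma_i^q \le \norm{M}^{q-1}\sum_i \sigma_i$, which is exactly the claimed bound once the two sums are re-identified with $\normsub{M}{q}^q$ and $\onenorm{M}$ respectively.

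There is no real obstacle here: the inequality \eqref{equation-spectral-norm-1-power-decrease} is precisely the special case $\norm{M} \le 1$, and the present statement is its homogeneous version, obtained by the same per-singular-value estimate without normalizing the top singular value to $1$. The only point one must keep in mind is that the exponent $q-1$ has to be nonnegative for the monotonicity step to apply, and this is guaranteed by the hypothesis $q \ge 1$.
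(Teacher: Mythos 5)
Your proof is correct, but it takes a different route from the paper's. The paper proves the lemma by a one-line rescaling argument: it applies the already-stated inequality \eqref{equation-spectral-norm-1-power-decrease} (valid for matrices of spectral norm at most $1$) to the normalized matrix $M/\norm{M}$ and multiplies back through by $\norm{M}^q$, so the special case is used as a black box and the lemma is read off as its homogeneous form. You instead give a self-contained argument at the level of singular values: writing $\normsub{M}{q}^q = \sum_i \sigma_i^q$, $\onenorm{M} = \sum_i \sigma_i$, $\norm{M} = \sigma_1$, and bounding $\sigma_i^q \le \sigma_1^{q-1}\sigma_i$ term by term. Your approach is more elementary and actually proves \eqref{equation-spectral-norm-1-power-decrease} along the way rather than assuming it (it also avoids the vacuous division-by-zero issue when $M = 0$, which the paper's normalization silently skips); the paper's approach is shorter given that the special case has already been stated. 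Both are complete and valid, and your closing remark correctly identifies the relationship between the two: the lemma is the homogenization of the cited special case.
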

\begin{proof}
We have $\norm{\frac{M}{\norm{M}}}=\frac{\norm{M}}{\norm{M}}=1$,
so by \cref{equation-spectral-norm-1-power-decrease}, we have
$$\normsub{\frac{M}{\norm{M}}}{q}^q\le\onenorm{\frac{M}{\norm{M}}}.$$
\end{proof}


\subsection{Quantum Mechanics}

In quantum mechanics, a quantum system $A$ is associated with a finite-dimensional Hilbert space,
which can also be denoted by $A$.
We use $\Herm(A)$ or $\Herm_m$ to denote the set of Hermitian operators.
We use $\PSD(A)$ or $\PSD_m$ to denote the set of positive semidefinite operators.

Quantum registers in a quantum system $A$ are described by \textit{density operators},
which are the operators in $\PSD(A)$ that have unit trace.
We use $\Density(A)$ or $\Density_m$ to denote the density operators in $\PSD(A)$ or $\PSD_m$.
Let $\varphi\in\Density(A)$ be a quantum register.
If $\varphi = \ketbra{u}$, then $\varphi$ is a \textit{pure} quantum state,
and can also be written as $\ket{u}$ for simplicity.
Let $\sigma\in\Density(B)$ be a quantum register in another quantum system $B$,
then the composite system of $\varphi$ and $\sigma$ can be written as $\varphi\otimes\sigma$,
where $\otimes$ is the Kronecker product.
We use $\varphi^{\otimes n}$ to represent $n$ copies
of $\varphi$ in product state.

The operations that can be applied to quantum states are quantum channels,
which are \textit{completely positive, trace-preserving maps} (CPTP maps).
We will be considering a very specific type of quantum channels here:
the quantum depolarizing channels.
For $\rho\in[0,1]$, we define the qubit channel $\NoiseOp: \Matrix_2\to\Matrix_2$ to be
\begin{equation}
\Noise{\varphi}\defeq\rho\varphi + (1-\rho)\Tr\left[\varphi\right]\frac{\id}{2}.
\end{equation}
It is clear that $\NoiseOp$ is a CPTP map.

We use $\ket{\Phi}$ to represent the $2$-qubit EPR states.
That is, $\ket{\Phi} \defeq \frac{\ket{00}+\ket{11}}{\sqrt{2}}$.
The $\rho$-isotropic states are EPR states affected by the depolarizing channels.
That is, for any $\rho\in[0, 1]$, the $\rho$-isotropic state is
\begin{equation}
\Phi_\rho\defeq\left(\NoiseOp\otimes\id\right)\left(\Phi\right).
\end{equation}
Since the depolarizing channels can be considered as noise in the environment,
quantum isotropic states can also be viewed as noisy EPR pairs.

Let $C: \Matrix(A)\to\Matrix(B)$ be a quantum channel. The Kraus representation of $C$ is given as
\[C(\varphi) = \sum_aM_a\varphi M_a^\dagger,\]
where
$\set{M_a}$'s satisfy $\sum_a M_a^\dagger M_a = \id_A$. The conjugate map $C^\dagger: \Matrix(B)\to\Matrix(A)$ is defined as
$$C^\dagger(\sigma) = \sum_aM_a^\dagger\sigma M_a.$$
$C^\dagger$ is not necessary a quantum channel, as it need not be trace-preserving.
But $C^\dagger$ is completely positive by definition and unital. Namely,
\begin{equation}\label{equation-conjugate-map-on-id}
C^\dagger(\id_B)=\sum_aM_a^\dagger\id_BM_a=\sum_aM^\dagger_a M_a=\id_A.
\end{equation}
Also, for any matrix $A, B$, we have
\begin{equation}
\Tr\left[A\cdot C(B)\right]=\sum_a\Tr\left[AM_aBM_a^\dagger\right]=\sum_a\Tr\left[M_a^\dagger AM_aB\right]=\Tr\left[C^\dagger(A)\cdot B\right].
\end{equation}
It is not hard to verify that
\[\NoiseOp=(\varphi)=(1-\frac{3\rho}{4})\varphi+\frac{\rho}{4}(X\varphi X+Y\varphi Y+Z\varphi Z),\]
Where $X,Y,Z$ are Pauli matrices. Thus, $\NoiseOp$ is self adjoint: $\NoiseOp=\NoiseOp^\dagger$.

\begin{lemma}\label{lemma-otimes-EPR-to-times}
Let $A, B\in\Matrix_{2^n}$, $\rho\in[0, 1]$, we have
\begin{equation}\label{equation-otimes-EPR-to-times}
\Tr\left[(A\otimes B)\cdot\Phi_\rho^{\otimes n}\right]=2^{-n}\Tr\left[\NoiseN{A}\cdot B^T\right],
\end{equation}
\end{lemma}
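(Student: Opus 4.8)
The plan is to reduce \cref{equation-otimes-EPR-to-times} to two standard ingredients---the ``transpose trick'' for the EPR state and the self-adjointness of the depolarizing channel $\NoiseOp$---and then to bootstrap from a single copy to the $n$-fold tensor power by multiplicativity.

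First I would establish the single-copy version ($n=1$): for $A,B\in\Matrix_2$,
\begin{equation*}
\Tr\left[(A\otimes B)\cdot\NEPR\right]=\tfrac{1}{2}\Tr\left[\Noise{A}\cdot B^T\right].
\end{equation*}
Since $\NEPR=(\NoiseOp\otimes\id)(\Phi)$ and $\NoiseOp=\NoiseOp^\dagger$, the adjoint relation $\Tr[M\cdot C(N)]=\Tr[C^\dagger(M)\cdot N]$, applied to the channel $\NoiseOp\otimes\id$ whose adjoint is $\NoiseOp^\dagger\otimes\id=\NoiseOp\otimes\id$, moves the noise onto $A$:
\begin{equation*}
\Tr\left[(A\otimes B)\cdot(\NoiseOp\otimes\id)(\Phi)\right]=\Tr\left[(\Noise{A}\otimes B)\cdot\Phi\right].
\end{equation*}
It then remains to verify the transpose (``ricochet'') identity $\Tr[(C\otimes D)\Phi]=\bra{\Phi}(C\otimes D)\ket{\Phi}=\tfrac{1}{2}\Tr[CD^T]$, which follows immediately by expanding $\ket{\Phi}=\tfrac{1}{\sqrt{2}}\sum_i\ket{ii}$ and reading off the matrix entries of $C$ and $D$ in the computational basis.

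For general $n$, I would exploit that both sides of \cref{equation-otimes-EPR-to-times} are bilinear in $A$ and $B$, so it suffices to check the identity on product operators $A=A_1\otimes\cdots\otimes A_n$ and $B=B_1\otimes\cdots\otimes B_n$, which span $\Matrix_{2^n}$. On such operators the trace factorizes across the $n$ copies, each factor is handled by the single-copy identity above, and reassembling the product yields exactly $2^{-n}\Tr[\NoiseN{A}\cdot B^T]$ via $\NoiseN{A}=\bigotimes_i\Noise{A_i}$ and $B^T=\bigotimes_i B_i^T$.

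The main obstacle is purely bookkeeping: the state $\NEPRN$ lives on the interleaved registers $A_1B_1A_2B_2\cdots A_nB_n$, whereas the operator $A\otimes B$ groups all of Alice's registers together and all of Bob's together. I would handle this by applying one fixed permutation of subsystems simultaneously to the operator and to the state, under which the trace is invariant, so that the $i$-th EPR pair is matched with the $i$-th single-qubit tensor factors of $A$ and $B$ before invoking the single-copy identity. Once the registers are aligned in this way, the factorization of the trace and of $\NoiseN{\cdot}$ over the $n$ copies is routine.
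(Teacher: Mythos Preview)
Your proposal is correct and uses exactly the same two ingredients as the paper's proof: the self-adjointness of $\NoiseOp$ to move the noise onto $A$, and the transpose identity $\Tr[(X\otimes Y)\Phi^{\otimes n}]=2^{-n}\Tr[XY^T]$. The only difference is organizational---the paper applies both identities directly at the $n$-fold level in a single chain of equalities, whereas you first prove the $n=1$ case and then tensor up via bilinearity and a register-permutation argument; this extra bookkeeping is not needed if one is willing to invoke the $n$-qubit transpose trick directly.
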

where $\NoiseOp^{\otimes n}$ denotes $n$ channels in product $\NoiseOp\otimes\dots\otimes\NoiseOp$.
\begin{proof}
  \begin{align*}
\Tr\left[(A\otimes B)\cdot\NEPRN\right] &=\Tr\left[(A\otimes B)\cdot(\NoiseOp^{\otimes n}\otimes\id)(\Phi^{\otimes n})\right] \\
          &=\Tr\left[(\NoiseOp^{\otimes n}\otimes\id)(A\otimes B)\cdot\Phi^{\otimes n}\right] \\
          &=\Tr\left[\left(\NoiseOp^{\otimes n}(A)\otimes B\right)\cdot\Phi^{\otimes n}\right] \\
          &=2^{-n}\Tr\left[\NoiseN{A}\cdot B^T\right].
  \end{align*}
  The first equality is the definition of quantum isotropic states.
  The second equality follows since the depolarizing channels are self-adjoint.
  The last equality follows from the equality
  \begin{equation*}
    \Tr\left[(X\otimes Y)\cdot\Phi^{\otimes n}\right] = 2^{-n}\Tr\left[X\cdot Y^T\right]
  \end{equation*}
  for any operators $X, Y\in\Matrix_{2^n}$.
\end{proof}

\subsection{Hypercontractivity}\label{section-hypercontractivity}
Our results rely heavily on the hypercontractive property of quantum depolarizing channels on
Schatten $p$-norms.
We have the following hypercontractivity bound from \cite{king2014hypercontractivity}:
\begin{theorem}[\cite{king2014hypercontractivity}, Theorem 1]\label{Theorem-hypercontractivity1}
Let $A\in\Matrix_{2^n}$ be a matrix, $\rho\in[0, 1]$, and $p, q$ be positive real numbers
satisfying $1\leq p<q\leq \infty$ and $\rho\le\sqrt{\frac{p-1}{q-1}}$, then
\begin{equation}\label{equation-hypercontractivity1}
  2^{-n/q}\normsub{\NoiseN{A}}{q}\le2^{-n/p}\normsub{A}{p}.
\end{equation}
\end{theorem}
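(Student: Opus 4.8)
The plan is to reduce the $n$-fold inequality to the single-qubit case $n=1$ and then establish the base case by relating it to the classical two-point (Bonami--Beckner) inequality. It is convenient to pass to the normalized trace $\tau_n = 2^{-n}\Tr$, for which $2^{-n/p}\normsub{A}{p} = \br{\tau_n[\abs{A}^p]}^{1/p}$; the assertion \eqref{equation-hypercontractivity1} then reads exactly that $\NoiseOp^{\otimes n}$ is a contraction from the normalized Schatten space $L_p(\tau_n)$ into $L_q(\tau_n)$. Since $\NoiseOp$ is unital and trace preserving, both the normalization and the identity element are respected under tensoring ($\tau_n=\tau_1^{\otimes n}$), so the statement has the right shape for a tensorization argument once the single-qubit bound is in hand.

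For the base case $n=1$ I would expand $A\in\Matrix_2$ in the Pauli basis, $A = a_0\id + a_1 X + a_2 Y + a_3 Z$. Because $\NoiseOp$ fixes $\id$ and multiplies each traceless Pauli by $\rho$, one has $\Noise{A} = a_0\id + \rho(a_1 X + a_2 Y + a_3 Z)$. When $A$ is Hermitian the coefficients are real, the eigenvalues of $A$ are $a_0 \pm r$ with $r = \sqrt{a_1^2+a_2^2+a_3^2}$, and those of $\Noise{A}$ are $a_0 \pm \rho r$, so the claim collapses to the scalar two-point inequality
\[
  2^{-1/q}\br{\abs{a_0+\rho r}^q + \abs{a_0-\rho r}^q}^{1/q}
  \le 2^{-1/p}\br{\abs{a_0+r}^p + \abs{a_0-r}^p}^{1/p},
\]
which holds precisely when $\rho\le\sqrt{(p-1)/(q-1)}$ and is the classical hypercontractive inequality on a single bit. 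The complex (non-Hermitian) case does not diagonalize against the Pauli directions in the same way, since Schatten norms see singular values rather than eigenvalues; here I would compute the two singular values of $a_0\id + \rho\,\vec a\cdot\vec\sigma$ directly from the invariants $\sum_i\abs{a_i}^2$ and $\det = a_0^2-\sum_i a_i^2$, or symmetrize to reduce the general matrix to the self-adjoint situation.

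The step I expect to be the main obstacle is the tensorization. In the commutative world a contraction $L_p\to L_q$ tensorizes automatically by Minkowski's integral inequality, but for $p<q$ this fails for noncommutative $L_p$ spaces: passing from the single-qubit bound to $\NoiseOp^{\otimes n}$ genuinely requires the \emph{completely bounded} norm of $\NoiseOp\colon L_p(\tau_1)\to L_q(\tau_1)$ to be at most one, i.e.\ one must control $\NoiseOp\otimes\mathrm{id}_{\Matrix_d}$ uniformly in $d$. I would therefore prove the single-qubit estimate in this stronger completely bounded form and only then iterate. An alternative route avoiding completely bounded norms uses that $\set{\NoiseOp}$ is a semigroup, $\Delta_{\rho_1}\circ\Delta_{\rho_2}=\Delta_{\rho_1\rho_2}$, with generator $L(\varphi)=\varphi-\Tr[\varphi]\tfrac{\id}{2}$: one establishes a logarithmic Sobolev inequality for $L$ on a single qubit, observes that the entropy and the Dirichlet form are additive over tensor factors so the single-qubit inequality lifts to $L_n=\sum_i L^{(i)}$ with the same constant, and then invokes Gross's equivalence between the log-Sobolev inequality and hypercontractivity to recover \eqref{equation-hypercontractivity1}. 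Either way, the noncommutative tensorization is where the real content lies; the two-point inequality itself is classical.
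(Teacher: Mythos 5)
The paper never proves \cref{Theorem-hypercontractivity1} itself --- it is imported from King's work --- but it does adapt King's argument to prove the generalization \cref{Theorem-hypercontractivity2}, so that proof is the right benchmark for your plan. Your base case is correct: the Pauli expansion reduces a Hermitian one-qubit input to the classical two-point Bonami inequality, and general inputs are handled by the standard fact (invoked in the paper's proof) that the $p\to q$ norm of a completely positive map is attained on positive semidefinite inputs. You have also correctly located the crux in tensorization. The problem is that both of your proposed repairs of that step have genuine gaps. For route (a): in the normalized-trace framework you set up, the requirement that one ``control $\NoiseOp\otimes\mathrm{id}_{\Matrix_d}$ uniformly in $d$'' as a map $L_p(\tau)\to L_q(\tau)$, where $\tau$ is the normalized trace on $\Matrix_2\otimes\Matrix_d$, is false, not merely hard. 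Take $A=\id_2\otimes\ketbra{u}$ with $u\in\mathbb{C}^d$ a unit vector: then $(\NoiseOp\otimes\id_{\Matrix_d})(A)=A$, while $\normsub{A}{L_p(\tau)}=d^{-1/p}$ and $\normsub{A}{L_q(\tau)}=d^{-1/q}$, so the $p\to q$ norm is at least $d^{1/p-1/q}\to\infty$. The uniform-in-$d$ statement that is true is exactly \cref{Theorem-hypercontractivity2}, whose normalizing factors $2^{-n/q}$, $2^{-n/p}$ count only the qubits hit by the noise and not the identity factor; that asymmetry is essential, and the resulting statement is not a corollary of the one-qubit bound --- proving it is the entire content of the paper's argument. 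The genuinely completely bounded framework (Pisier's vector-valued $L_p$, in which CB norms are multiplicative by Devetak--Junge--King--Ruskai) does exist, but establishing the one-qubit CB bound there is a separate and harder problem, studied only later by Beigi and King; taking it as a ``base case'' is circular.

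For route (b): the Dirichlet form is indeed additive over tensor factors, but the other half of the classical tensorization argument --- subadditivity of the entropy functional with respect to single-site conditional entropies --- is exactly what does not carry over to matrix algebras. Tensor-stability of quantum log-Sobolev constants is a recognized open problem in general (it is what later motivated ``complete'' log-Sobolev inequalities), so ``the single-qubit inequality lifts to $L_n=\sum_i L^{(i)}$ with the same constant'' cannot be an observation; for this semigroup the known route (King's) derives the product LSI \emph{from} hypercontractivity rather than the other way around. In addition, Gross's equivalence in the noncommutative setting (Olkiewicz--Zegarlinski) does not run from the $2$-LSI alone: to convert it into $p\to q$ hypercontractivity one needs an $L_p$-regularity estimate comparing $\left\langle A^{p/2},L(A^{p/2})\right\rangle$ with $\left\langle A,L(A^{p-1})\right\rangle$, which for this generator is precisely \cref{Lemma-hcl1}. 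King's proof, and the paper's adaptation of it, resolves both issues by reversing your order of reasoning: tensorization is proved directly at the level of hypercontractivity, but only in the restricted range $1\le p\le 2\le q$ where an operator-inequality argument is available (\cref{Lemma-Hypercontractivity2-Induction-Weak}); differentiating that restricted inequality at $t=0$ yields the $n$-qubit log-Sobolev inequality \eqref{Equation-hc-log-sobolev}; and the semigroup monotonicity argument, using \cref{Lemma-hcl1}, then upgrades it to all $1\le p<q$. The product LSI is thus an output of restricted hypercontractivity rather than an input obtained by additivity; that reversal is the idea your plan is missing.
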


We adapt the proof in \cite{king2014hypercontractivity},
to prove a more general hypercontractivity bound that combines
\cref{equation-hypercontractivity1} and \cref{equation-normed-shatten-norm1}.
The following lemmas are used:
\begin{lemma}[\cite{king2014hypercontractivity}, Lemma 5]\label{Lemma-hcl1}
  Let $H\in\Matrix_2$ be a matrix such that for all $t\ge 0$, the quantum channel $e^{-tH}$ defined as
  $$e^{-tH}(\rho) = \sum_{n=0}^\infty\frac{(-t)^n}{n!}H^n(\rho)$$
  is completely positive. For $n\ge 1$, let $H^{(n)}=\id_{2^{n-1}}\otimes H$.
  Then for any $A\in\PSD_{2^{n}}$ and $p\ge 1$,
  $$\left\langle A^{p/2}, H^{(n)}(A^{p/2})\right\rangle\le\frac{(p/2)^2}{p-1}\left\langle A, H^{(n)}(A^{p-1})\right\rangle.$$
\end{lemma}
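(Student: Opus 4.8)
The plan is to reduce the operator inequality to a family of scalar two-point inequalities by diagonalizing $A$ and exploiting that $H$ is self-adjoint with respect to the Hilbert--Schmidt inner product $\left\langle X,Y\right\rangle=\Tr[X^\dagger Y]$ (this holds for the depolarizing generators to which the lemma is applied, and it passes to $H^{(n)}=\id_{2^{n-1}}\otimes H$). Write $A=\sum_i\lambda_i\ketbra{i}$ with $\lambda_i\ge 0$. Since $A^{p/2}$ and $A^{p-1}$ are diagonal in $\{\ket{i}\}$, expanding the two traces leaves only the numbers $c_{ij}:=\bra{i}H^{(n)}(\ketbra{j})\ket{i}$, giving $\left\langle A^{p/2},H^{(n)}(A^{p/2})\right\rangle=\sum_{i,j}c_{ij}\lambda_i^{p/2}\lambda_j^{p/2}$ and $\left\langle A,H^{(n)}(A^{p-1})\right\rangle=\sum_{i,j}c_{ij}\lambda_i\lambda_j^{p-1}$.

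First I would pin down the sign structure of $(c_{ij})$ from complete positivity. Because $e^{-tH^{(n)}}=\id_{2^{n-1}}\otimes e^{-tH}$ is completely positive, the map $t\mapsto\bra{i}e^{-tH^{(n)}}(\ketbra{j})\ket{i}$ is nonnegative for all $t\ge 0$ and equals $\delta_{ij}$ at $t=0$; for $i\ne j$ it therefore starts at $0$ and stays nonnegative, forcing a nonnegative right-derivative at $0$, i.e. $c_{ij}\le 0$. Self-adjointness gives $c_{ij}=c_{ji}$, and trace preservation (equivalently, unitality, given self-adjointness) gives $\sum_i c_{ij}=0$, hence $c_{ii}=\sum_{j\ne i}w_{ij}$ where $w_{ij}:=-c_{ij}\ge 0$ is symmetric. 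Substituting for $c_{ii}$ and symmetrizing the off-diagonal sums collapses both quantities into weighted sums of differences, namely $\left\langle A^{p/2},H^{(n)}(A^{p/2})\right\rangle=\tfrac12\sum_{i\ne j}w_{ij}\big(\lambda_i^{p/2}-\lambda_j^{p/2}\big)^2$ and $\left\langle A,H^{(n)}(A^{p-1})\right\rangle=\tfrac12\sum_{i\ne j}w_{ij}\big(\lambda_i^{p-1}-\lambda_j^{p-1}\big)\big(\lambda_i-\lambda_j\big)$.

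Since every $w_{ij}\ge 0$, it then suffices to prove the scalar two-point inequality $\big(a^{p/2}-b^{p/2}\big)^2\le\frac{(p/2)^2}{p-1}\big(a^{p-1}-b^{p-1}\big)(a-b)$ for all $a,b\ge 0$ and $p>1$, term by term. I would prove this by writing each difference as an integral, $a^{p/2}-b^{p/2}=\frac p2\int_b^a s^{p/2-1}\,ds$ and $a^{p-1}-b^{p-1}=(p-1)\int_b^a s^{p-2}\,ds$, and applying Cauchy--Schwarz to the functions $1$ and $s^{p/2-1}$ on $[b,a]$, using $\big(s^{p/2-1}\big)^2=s^{p-2}$. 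This yields $\big(\int_b^a s^{p/2-1}ds\big)^2\le(a-b)\int_b^a s^{p-2}ds$, which after multiplying by $(p/2)^2$ is exactly the claimed inequality with the sharp constant $\frac{(p/2)^2}{p-1}$; the borderline $p=1$ is then recovered by a limiting argument.

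The main obstacle is the reliance on self-adjointness of $H$: without $c_{ij}=c_{ji}$ the cross terms do not recombine into the squared differences $\big(\lambda_i^{p/2}-\lambda_j^{p/2}\big)^2$, and the clean reduction to a single scalar inequality breaks down. Thus the crux is a structural step rather than an analytic one: verifying that the relevant generator is Hilbert--Schmidt self-adjoint (true for the depolarizing generator $H(\varphi)=\varphi-\Tr[\varphi]\frac{\id}{2}$, as one checks directly) and that complete positivity delivers the correct sign $c_{ij}\le 0$ of the off-diagonal matrix elements. Once these facts are in place, the analytic heart of the lemma is the one-variable Cauchy--Schwarz estimate above, which is where the exponent-dependent constant originates.
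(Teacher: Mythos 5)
A preliminary remark: the paper never proves \cref{Lemma-hcl1} at all --- it is imported verbatim from King's paper --- so there is no in-paper proof to compare against, and your proposal has to be judged on its own merits. Its analytic skeleton is correct and is the standard Stroock--Varopoulos route: diagonalizing $A$, extracting $c_{ij}\le 0$ for $i\ne j$ from complete positivity of $e^{-tH^{(n)}}$, using trace preservation for the vanishing column sums, and closing with the two-point inequality $\bigl(a^{p/2}-b^{p/2}\bigr)^2\le\frac{(p/2)^2}{p-1}\bigl(a^{p-1}-b^{p-1}\bigr)(a-b)$ via Cauchy--Schwarz applied to $\int_b^a s^{p/2-1}\,ds$. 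The genuine gap is the one you flag yourself but then wave off: Hilbert--Schmidt self-adjointness of $H$ (really a superoperator on $\Matrix_2$, despite the statement's wording) is not among the hypotheses, and it cannot be recovered from them, because the lemma as quoted here --- assuming only that $e^{-tH}$ is a CPTP semigroup --- is in fact \emph{false}. Take the amplitude-damping generator $H(\varphi)=\frac12\{\ketbra{1},\varphi\}-\ketbratwo{0}{1}\varphi\,\ketbratwo{1}{0}$, for which $e^{-tH}$ is a quantum channel for every $t\ge 0$; with $n=1$ and $A=\mathrm{diag}(a,1)$ one gets $H(A^s)=\mathrm{diag}(-1,1)$ for all $s>0$, so the left side equals $1-a^{p/2}$ while the right side equals $\frac{(p/2)^2}{p-1}(1-a)$, and at $p=4$, $a=0.9$ this reads $0.19\le 0.133$, which fails. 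The paper's restatement has silently dropped a necessary hypothesis: King's lemma concerns semigroups of \emph{unital} qubit channels, and unitality (or your self-adjointness) is exactly what is missing.

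Under your added self-adjointness hypothesis the proof is complete and correct, and that restricted version suffices for everything in this paper, since the lemma is only ever invoked for $H_u(X)=X-\frac12\Tr\left[X\right]\id$ (tensored with identities), which is unital, trace-annihilating, and Hilbert--Schmidt self-adjoint. Two further points. First, your reduction genuinely needs more than CP, TP and unitality: the directed-cycle generator on a \emph{qutrit} (jump rates $w_{12}=w_{23}=w_{31}=1$, all row and column sums of the rate matrix zero, realized as a unital CPTP Lindblad semigroup preserving the diagonal algebra) violates the inequality at $p=4$ with $A=\mathrm{diag}(1,2.4,0)$: the left side is about $28.42$, the right side about $27.14$. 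So the lemma is irreducibly a one-qubit statement, and any correct proof must exploit that structure. Second, for qubits your argument can be patched to cover King's actual hypotheses rather than only the self-adjoint case: in the Pauli basis, a Hermiticity-preserving generator with $H(\id)=0$ and $\Tr\left[H(X)\right]=0$ is a real $3\times 3$ matrix on the traceless subspace, so its antisymmetric (non-self-adjoint) part is a cross-product map, i.e.\ equals $i[K,\cdot\,]$ for some Hermitian $K$. That commutator contributes zero to both sides of the inequality (every operator in sight is a power of $A$, so the traces cancel by cyclicity), and the symmetric part $H_s$ is again a legitimate generator of a unital CPTP semigroup, because the conditional-complete-positivity criterion is blind to commutators. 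Running your argument on $H_s$ then yields the lemma in King's generality.
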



\begin{lemma}[\cite{king2014hypercontractivity}, Theorem 3]\label{Lemma-Hypercontractivity2-Induction-Weak}
  Let $\Phi:\Matrix_2\to\Matrix_2$ be a unital qubit channel\footnote{The quantum qubit depolarizing channel is a unital qubit channel.}
  and $\Omega:\Matrix_d\to\Matrix_d$ be a completely positive map.
  Let $1\le p \le 2 \le q$, and suppose for all $M_1\in\Matrix_d$ and $M_2\in\Matrix_2$ we have
  $$\normsub{\Omega(M_1)}{q}\le\lambda_1\normsub{M_1}{p},$$
  $$\normsub{\Phi(M_2)}{q}\le\lambda_2\normsub{M_2}{p}.$$
  Then for all $M\in\Matrix_{2d}$,
  $$\normsub{\left(\Omega\otimes\Phi\right)(M)}{q}\le\lambda_1\lambda_2\normsub{M}{p}.$$
\end{lemma}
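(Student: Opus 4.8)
The plan is to prove the bound one tensor factor at a time, pivoting on the Hilbert-space exponent $2$; the hypothesis $1\le p\le 2\le q$ is precisely what makes the two single-factor steps compatible. Factor the map as $\Omega\otimes\Phi=(\id_d\otimes\Phi)\circ(\Omega\otimes\id_2)$, and view any $N\in\Matrix_{2d}$ as an element of $\Matrix_d\otimes\Matrix_2$. I would use mixed Schatten norms $\normsub{\cdot}{S_a[S_b]}$, meaning an inner Schatten-$b$ norm over one tensor factor followed by an outer Schatten-$a$ norm over the other. The genuine Schatten $q$-norm of $N$ equals the mixed norm with exponent $q$ over \emph{both} factors (a noncommutative Fubini identity, valid because the two exponents agree), and likewise for the $p$-norm. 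The idea is then to peel off $\Phi$ on the qubit factor, then $\Omega$ on the $\Matrix_d$ factor, each time applying the given single-system bound fibrewise while the other factor is carried along as a fixed coefficient (``value'') space, with one Minkowski interchange in between.

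Concretely, with $N'=(\Omega\otimes\id_2)(M)$, I would first apply $\Phi$'s bound on the inner $\Matrix_2$ factor, keeping $\Matrix_d$ as an $S_q$-valued coefficient space, which lowers the qubit exponent from $q$ to $p$ at the cost of $\lambda_2$. I would then invoke the noncommutative Minkowski (mixed-norm) inequality, legitimate since $q\ge p$, to interchange the order of the two factors so that $\Matrix_d$ becomes the inner factor (now with an $S_p$-valued qubit coefficient space), and finally apply $\Omega$'s bound on this inner $\Matrix_d$ factor, after which Fubini collapses the result to $\normsub{M}{p}$. Schematically the chain reads
\[
\normsub{(\Omega\otimes\Phi)(M)}{q}\ \le\ \lambda_2\,\normsub{N'}{S_q[S_p]}\ \le\ \lambda_2\,\normsub{N'}{S_p[S_q]}\ \le\ \lambda_1\lambda_2\,\normsub{M}{p},
\]
where in $\normsub{N'}{S_q[S_p]}$ the outer factor is $\Matrix_d$ (at exponent $q$) and the inner is $\Matrix_2$ (at exponent $p$), while the Minkowski step keeps each factor's exponent but swaps the order of evaluation, making $\Matrix_2$ the outer factor in $\normsub{N'}{S_p[S_q]}$.

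The main obstacle — and the only place the full force of $p\le 2\le q$ is used — is justifying that each single-system bound survives tensoring with a fixed coefficient space, i.e.\ upgrading $\normsub{\Phi(\cdot)}{q}\le\lambda_2\normsub{\cdot}{p}$ and $\normsub{\Omega(\cdot)}{q}\le\lambda_1\normsub{\cdot}{p}$ to vector-valued (completely bounded) analogues with the correct coefficient exponents ($S_q$-valued for $\Phi$, $S_p$-valued for $\Omega$). Such an upgrade is false for general $p<q$, but holds here because the coefficient exponents lie on the favourable side of $2$: the $S_q$-valued extension with $q\ge 2$ can be controlled for the unital qubit channel $\Phi$, and the $S_p$-valued extension with $p\le 2$ can be controlled for the completely positive map $\Omega$. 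It is exactly at this step that one must exploit that $\Phi$ is a two-dimensional unital qubit channel and that $\Omega$ is completely positive, following the argument of \cite{king2014hypercontractivity}; once these vector-valued bounds are in hand, the two Fubini identities and the single Minkowski interchange are routine. I would take particular care with the direction of the Minkowski inequality (it requires $q\ge p$, which is given) to confirm that the interchange yields an upper rather than a lower bound.
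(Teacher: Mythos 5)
First, a point of calibration: the paper itself does not prove this lemma at all --- it is imported verbatim as Theorem~3 of \cite{king2014hypercontractivity} and then used as a black box in the proof of \cref{Theorem-hypercontractivity2}. So there is no "paper proof" for you to match; your proposal has to stand as a self-contained argument, or at least reduce the statement to results genuinely independent of it. It does not. The skeleton you describe --- factor $\Omega\otimes\Phi$ as $(\id_d\otimes\Phi)\circ(\Omega\otimes\id_2)$, pivot at the exponent $2$, and interchange the two factors with a Minkowski-type inequality (whose direction, I agree, is the correct one for $p\le q$) --- is a legitimate high-level strategy; it is essentially the operator-space route of Pisier and of Devetak--Junge--King--Ruskai. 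But two of its ingredients are not what you claim they are. First, your working definition of the mixed norms is not valid in the noncommutative setting: $S_a[S_b]$ cannot be defined as "an inner Schatten-$b$ norm over one tensor factor followed by an outer Schatten-$a$ norm over the other," because an element of $\Matrix_d\otimes\Matrix_2$ does not decompose into $S_b$-valued fibers on which an inner norm can be computed. Pisier's vector-valued Schatten spaces are defined by factorization/interpolation, and both the "Fubini" identity $S_p[S_p]=S_p$ and the noncommutative Minkowski inequality you invoke are theorems of that theory, not bookkeeping.

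Second, and decisively: the two "vector-valued upgrades" --- that $\id\otimes\Phi$ has norm at most $\lambda_2$ from $S_q[S_p]$ to $S_q$, and that $\id\otimes\Omega$ has norm at most $\lambda_1$ from $S_p[S_p]$ to $S_p[S_q]$ --- are the entire mathematical content of the lemma, and your proposal leaves them unproved. They are certainly not generic facts: if such upgrades held for arbitrary completely positive maps, your chain of three inequalities would prove multiplicativity of $p\to q$ norms for arbitrary tensor products of CP maps in the range $1\le p\le 2\le q$, which is known to be false (the Hayden--Winter counterexamples already violate it at $p=1$). So the upgrades must exploit the specific hypotheses --- that $\Phi$ is a \emph{unital qubit} channel on one side and complete positivity on the other --- and establishing them under those hypotheses is precisely where the work lies in King's proof, via the structure theory of unital qubit channels (King--Ruskai) and Lieb--Thirring-type matrix inequalities. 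Your justification for this step, "following the argument of \cite{king2014hypercontractivity}," is circular: the statement being proved \emph{is} Theorem~3 of that paper. As it stands, the proposal reorganizes the theorem into two unproved claims that are jointly equivalent in difficulty to the theorem itself.
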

We are now ready to prove our generalized hypercontractivity inequality.
\begin{theorem}[Hypercontractivity, generalized]\label{Theorem-hypercontractivity2}
Let $n, m$ be positive integers, $A\in\Matrix_{2^{n+m}}$ be a matrix,
$\rho\in[0, 1]$ and $p, q$ be positive real numbers
satisfying $1\leq p\leq q< \infty$ and $\rho\le\sqrt{\frac{p-1}{q-1}}$, then we have
\begin{equation}\label{eqn:hcg}
  2^{-n/q}\normsub{\left(\NoiseOp^{\otimes n}\otimes\id_{2^m}\right)(A)}{q}\le2^{-n/p}\normsub{A}{p}.
\end{equation}
\end{theorem}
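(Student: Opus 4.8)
The plan is to treat $\NoiseOp^{\otimes n}\otimes\id_{2^m}$ as a tensor product of $n+m$ unital qubit channels -- $n$ copies of $\NoiseOp$ and $m$ copies of $\id_2$ -- and to assemble the constant on the right-hand side one factor at a time. Multiplying \cref{eqn:hcg} through by $2^{n/q}$, the claim is equivalent to $\normsub{(\NoiseOp^{\otimes n}\otimes\id_{2^m})(A)}{q}\le 2^{n(1/q-1/p)}\normsub{A}{p}$, so the target constant is precisely the product of $n$ depolarizing factors $2^{1/q-1/p}$ -- each the $n=1$ case of \cref{Theorem-hypercontractivity1} -- and $m$ identity factors $1$, the latter being the trivial bound $\normsub{\id_2(M)}{q}=\normsub{M}{q}\le\normsub{M}{p}$ furnished by norm monotonicity \cref{equation-normed-shatten-norm1}. (The degenerate case $p=q$ is immediate, since $\NoiseOp^{\otimes n}\otimes\id_{2^m}$ is a convex mixture of unitary conjugations and hence contracts every Schatten norm.) This is the sense in which the bound combines the two inequalities.

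For $1\le p\le 2\le q$ I would realize this assembly by iterating \cref{Lemma-Hypercontractivity2-Induction-Weak}. Carrying as an inductive hypothesis a completely positive map $\Omega_j$ on the first $j$ qubits with $\normsub{\Omega_j(M)}{q}\le 2^{n_j(1/q-1/p)}\normsub{M}{p}$, where $n_j$ is the number of depolarizing factors among the first $j$, I would adjoin the $(j{+}1)$-st qubit channel $\Phi\in\set{\NoiseOp,\id_2}$ and apply \cref{Lemma-Hypercontractivity2-Induction-Weak} with $\Omega=\Omega_j$ to multiply the constant by that channel's single-qubit factor. The base case $\Omega_1=\NoiseOp$ is exactly the $n=1$ instance of \cref{Theorem-hypercontractivity1}. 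After all $n+m$ steps the constant is $2^{n(1/q-1/p)}$; and since Schatten norms are unitarily invariant, the order in which depolarizing and identity factors are placed does not matter.

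The main obstacle is the range in which $p$ and $q$ lie on the same side of $2$, where \cref{Lemma-Hypercontractivity2-Induction-Weak} does not apply: the self-adjointness $\NoiseOp=\NoiseOp^\dagger$ only interchanges the two same-side cases by duality, and the semigroup law $\Noise{\cdot}=\Delta_{\rho_1}(\Delta_{\rho_2}(\cdot))$ only lets one route through intermediate exponents inside $[p,q]$, so neither reduces the problem to the straddling case. For these ranges I would instead adapt the argument underlying \cref{Theorem-hypercontractivity1} directly. First I would reduce to the boundary noise $\rho_0=\sqrt{(p-1)/(q-1)}$: writing $\NoiseOp=\Delta_{\rho/\rho_0}\circ\Delta_{\rho_0}$ and using that $\Delta_{\rho/\rho_0}^{\otimes n}\otimes\id_{2^m}$, again a convex mixture of unitary conjugations, contracts the $q$-norm, the case $\rho\le\rho_0$ follows from $\rho=\rho_0$. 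At the boundary I would run the semigroup flow: with $H(\varphi)=\varphi-\Tr[\varphi]\frac{\id}{2}$, so that $\Delta_{e^{-t}}=e^{-tH}$, put $\mathcal{H}=\sum_{i=1}^n H^{(i)}$ supported only on the $n$ depolarized qubits, giving $e^{-t\mathcal{H}}=\Delta_{e^{-t}}^{\otimes n}\otimes\id_{2^m}$, and let $q(t)$ solve $q(0)=p$, $q(t)-1=e^{2t}(p-1)$, which reaches $q(t^*)=q$ exactly when $e^{-t^*}=\rho_0$. Differentiating $F(t)=2^{-n/q(t)}\normsub{e^{-t\mathcal{H}}(A)}{q(t)}$ for positive semidefinite $A$ and proving $F'(t)\le 0$ reduces site by site to \cref{Lemma-hcl1}; summed over the $n$ terms $H^{(i)}$ this yields precisely the factor $2^{n(1/q-1/p)}$, while the $m$ identity qubits, bearing no generator term, are mere spectators. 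The two points needing care are the reduction from a general matrix $A$ to a positive semidefinite one (as \cref{Lemma-hcl1} is stated for positive semidefinite operators) and the derivative bookkeeping that pits the entropy term against the Dirichlet terms.
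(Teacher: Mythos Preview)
Your plan matches the paper's approach in its essentials: reduce to positive semidefinite $A$ and to the critical noise $\rho=\sqrt{(p-1)/(q-1)}$, establish the straddling range $1\le p\le 2\le q$ by iterating \cref{Lemma-Hypercontractivity2-Induction-Weak} over the $n+m$ single-qubit factors (with per-qubit constants $2^{1/q-1/p}$ for $\NoiseOp$ and $1$ for $\id_2$), and cover the remaining ranges via the semigroup flow together with \cref{Lemma-hcl1}. The paper does exactly this, only it runs the semigroup argument uniformly for all $(p,q)$ rather than splitting off the straddling case as a separate endpoint.

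The one point you leave genuinely unresolved is what you call ``the derivative bookkeeping that pits the entropy term against the Dirichlet terms.'' This is not bookkeeping: after \cref{Lemma-hcl1} converts each Dirichlet term $\Tr B^{q-1}H^{(k)}(B)$ into $\Tr B^{q/2}H^{(k)}(B^{q/2})$, showing $F'(t)\le 0$ amounts precisely to a log-Sobolev inequality
\[
n\ln 2\,\Tr X^2 - \Tr X^2\ln\Tr X^2 + \Tr X^2\ln X^2 \;\le\; 2\sum_{k=1}^{n}\Tr X\,H^{(k)}(X)
\]
for arbitrary $X\in\Matrix_{2^{n+m}}$, with the Dirichlet sum running only over the $n$ depolarized sites. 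This is \emph{not} the log-Sobolev inequality from the proof of \cref{Theorem-hypercontractivity1}: there the entropy and the Dirichlet form live on the same $n$ qubits, whereas here the $m$ ``spectator'' qubits contribute to the entropy term but not to the Dirichlet form, so the inequality you need is strictly stronger and does not follow just by citing King's argument. The paper closes this gap by observing that this generalized log-Sobolev inequality is exactly $g'(0)\big|_{p=q=2}\le 0$, which is already a consequence of the straddling-range bound you obtained via \cref{Lemma-Hypercontractivity2-Induction-Weak}. In other words, your two sub-arguments are not independent: the iterated straddling case feeds back into the semigroup argument by supplying the needed log-Sobolev inequality. Once you make that link explicit, your proof coincides with the paper's.
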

\begin{remark}
  The above theorem combines \cref{equation-hypercontractivity1} and \cref{equation-normed-shatten-norm1} in the following way:
  When $A$ is of the form $A = B\otimes C$ where $B\in\Matrix_{2^n}$ and $C\in\Matrix_{2^m}$,
  then
  $$2^{-n/q}\normsub{\left(\NoiseOp^{\otimes n}\otimes\id_{2^m}\right)(B\otimes C)}{q} =
    2^{-n/q}\normsub{\NoiseOp^{\otimes n}(B)}{q}\cdot\normsub{C}{q}$$
  and
  $$2^{-n/p}\normsub{B\otimes C}{p} = 2^{-n/p}\normsub{B}{p}\cdot\normsub{C}{p}.$$
  Then \cref{eqn:hcg} holds from \cref{equation-hypercontractivity1} and \cref{equation-normed-shatten-norm1}.
\end{remark}
\begin{proof}
 The proof follows closely to that for \cref{Theorem-hypercontractivity1} in \cite{king2014hypercontractivity}.

  Since the map $\NoiseOp^{\otimes n}\otimes\id_{2^m}$ is completely positive it follows from \cite{audenaert09, notessp05}
  \footnote{See Theorem 1 from \cite{audenaert09} and Theorem 1 from \cite{notessp05}} that
  the supremum of $\normsub{\left(\NoiseOp^{\otimes n}\otimes\id_{2^m}\right)(A)}{q}/\normsub{A}{p}$ is always achieved on positive semidefinite matrices.
  So we can assume here $A\ge 0$.
  Also, we can assume $\rho=\sqrt{\frac{p-1}{q-1}}$,
  because the left hand side of Eq.~\eqref{eqn:hcg} is increasing over $q$.

  Let $\rho=e^{-t}$ and thus $q(t) = 1+e^{2t}(p-1)$, we only need to prove for all $t\ge0$,
  \begin{equation}\label{Theorem-hypercontractivity2-Proof-Main}
  2^{-n/q(t)}\normsub{\left(\Delta_{e^{-t}}^{\otimes n}\otimes\id\right)(A)}{q(t)}\le2^{-n/p}\normsub{A}{p}.
  \end{equation}
  Since equality holds at $t=0$, it is sufficient to prove that for $t \ge 0$ the left hand side is a
  non-increasing function of $t$.
  Let
$$B\defeq\left(\Delta_{e^{-t}}^{\otimes n}\otimes\id\right)(A),\ g(t)=\ln\left(2^{-n/q(t)}\normsub{B}{q(t)}\right)$$
  then we find
  \begin{multline}\label{Equation-hc-derivative}
    g^\prime(t) = \frac{2}{\Tr B^q}\frac{q-1}{q^2}\left[n\ln2\Tr B^q  -\Tr B^q\ln\Tr B^q+\Tr\left(B^q\ln B^q\right)-\frac{q^2}{2(q-1)}\sum_{k=1}^n\Tr B^{q-1}H_u^{(k)}(B)\right]
  \end{multline}
  where $H_u(A) = A - \frac{1}{2}\Tr\left[A\right]\id$ and
  $$H_u^{(k)} = \id_{2^{k-1}}\otimes H_u\otimes\id_{2^{n+m-k}}, \ k=1,\dots,n.$$

  We first derive a log-Sobolev inequality which will be used later.
  By repeatedly using \cref{Lemma-Hypercontractivity2-Induction-Weak},
  we can prove \cref{Theorem-hypercontractivity2-Proof-Main} for the parameters $p$ and $q$ in the range $1\le p\le 2\le q$.
  Specifically, when $p=2$ and $q = 1+e^{2t}$.
  Using the fact the \cref{Theorem-hypercontractivity2-Proof-Main} becomes an equality at $t=0$, the derivative must be non-positive.
  Hence,
  \begin{equation}
    g^\prime(0)|_{q=p=2} = \frac{1}{2\Tr A^2}\left[n\ln2\Tr A^2-\Tr A^2\ln\Tr A^2+\Tr A^2\ln A^2-2\sum_{k=1}^n\Tr AH_u^{(k)}(A)\right]\le 0.
  \end{equation}
  Note that $A$ can be an arbitrary matrix here and for the sake of clearness we use another symbol $X$ and obtain our log-Sobolev inequality:
  For all $X\in\Matrix_{2^{n+m}}$
  \begin{equation}\label{Equation-hc-log-sobolev}
    n\ln2\Tr X^2-\Tr X^2\ln\Tr X^2+\Tr X^2\ln X^2-2\sum_{k=1}^n\Tr XH_u^{(k)}(X) \le 0.
  \end{equation}

  We apply \cref{Lemma-hcl1} to the last term in \cref{Equation-hc-derivative},
  to deduce that for each $k=1,\dots, n$,
  \begin{equation}\label{Equation-hc-1}
  \Tr B^{q-1}H^{(k)}(B)\ge\frac{q-1}{(q/2)^2}\Tr B^{q/2}H^{(k)}(B^{q/2}).
  \end{equation}
  Applying \cref{Equation-hc-1} in \cref{Equation-hc-derivative} gives the inequality
  \begin{multline}\label{Equation-hc-derivative2}
    g^\prime(t) \le \frac{2}{\Tr B^q}\frac{q-1}{q^2}\left[n\ln2\Tr B^q  -\Tr B^q\ln\Tr B^q+\Tr\left(B^q\ln B^q\right)-2\sum_{k=1}^n\Tr B^{q/2}H_u^{(k)}(B^{q/2})\right]
  \end{multline}
  and then the log-Sobolev inequality \cref{Equation-hc-log-sobolev} with $X=B^{q/2}$ implies
  $$g^\prime(t)\le0.$$
\end{proof}

\section{Communication Free CR Generation}\label{section-communication-free}

In~\cite{bogdanov2011extracting} it is shown that if Alice and Bob share classical
$\rho$-correlated strings but can not communicate,
then their best strategy achieves a probability not exceeding
$2^{-k(1-\rho)/(1+\rho)}$
to agree on a common random string of min-entropy $k$.
Moreover, there exists a classical protocol that achieves $2^{-k(1-\rho)/(1+\rho)-\Theta(\log k)}$
probability. In this section, we study the case when Alice and Bob share unbounded copies of quantum isotropic states and prove that the same upper bound also applies. Thus there is no quantum advantage in this setting.

\begin{theorem}
For any $\rho\in[0,1]$, if Alice and Bob share infinitely many copies of $\rho$-isotropic states,
then their probability of agreeing on a random string of min-entropy $k$ is upper bounded by
$$2^{-k\frac{1-\rho}{1+\rho}}.$$
\end{theorem}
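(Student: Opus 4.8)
The plan is to describe the most general no-communication strategy as a pair of POVMs, reduce the agreement probability to a single trace expression via \cref{lemma-otimes-EPR-to-times}, and then feed it into matrix H\"older followed by the hypercontractivity bound. First I would fix the number of shared copies to be an arbitrary $n$ and model Alice's local operation as a POVM $\set{A_a}$ and Bob's as $\set{B_b}$ on $\Matrix_{2^n}$, where the index labels the string they try to agree on. Since the marginal of $\Phi_\rho$ on Alice's side is the maximally mixed state $\id/2$, her reduced state on $n$ copies is $\id/2^n$, so the probability that Alice outputs $a$ is $2^{-n}\Tr[A_a]$. The min-entropy-$k$ requirement then becomes the clean constraint $\Tr[A_a]\le 2^{n-k}$ for every $a$, alongside the POVM identities $\sum_a A_a=\sum_b B_b=\id$ and $0\le A_a,B_b\le\id$. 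The agreement probability is $\sum_a\Tr[(A_a\otimes B_a)\Phi_\rho^{\otimes n}]$, which by \cref{lemma-otimes-EPR-to-times} equals $2^{-n}\sum_a\Tr[\NoiseN{A_a}\,B_a^T]$.

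Next I would apply the matrix H\"older inequality (\cref{Theorem-Holder1}), with exponents $q$ and $q^{*}$ to be chosen, to the positive semidefinite families $\set{\NoiseN{A_a}}$ and $\set{B_a^T}$. The $B$-factor is controlled by the POVM constraint alone: because $0\le B_a\le\id$ and $q^{*}\ge 1$ we have $\Tr[B_a^{q^{*}}]\le\Tr[B_a]$, so $\sum_a\Tr[B_a^{q^{*}}]\le\Tr[\id]=2^n$. For the $A$-factor I would invoke hypercontractivity (\cref{Theorem-hypercontractivity1}) with $\rho=\sqrt{(p-1)/(q-1)}$ to replace $\normsub{\NoiseN{A_a}}{q}$ by $2^{n/q-n/p}\normsub{A_a}{p}$, and then bound $\sum_a\normsub{A_a}{p}^q$ using $\normsub{A_a}{p}\le(\Tr[A_a])^{1/p}\le 2^{(n-k)/p}$ together with $\sum_a\Tr[A_a^p]\le\sum_a\Tr[A_a]=2^n$. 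Collecting all powers of $2$, the factors of $n$ cancel (using $1/q+1/q^{*}=1$) and the total exponent collapses to $-k\br{1/p-1/q}$.

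The last step is the optimization: I would maximize $1/p-1/q$ over the hypercontractive region $\rho^2=(p-1)/(q-1)$, which a short calculation shows is attained at $p=1+\rho$ and $q=(1+\rho)/\rho$, giving exactly $1/p-1/q=(1-\rho)/(1+\rho)$ and hence the claimed bound $2^{-k\frac{1-\rho}{1+\rho}}$. Since this estimate is independent of $n$, it holds uniformly over the number of shared copies, and thus in the limit of infinitely many copies.

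I expect the main obstacle to be bookkeeping rather than anything conceptual. One must check that the transpose and the depolarizing channel both preserve positive semidefiniteness and all Schatten norms, so that the positive-semidefinite form of matrix H\"older genuinely applies to the $B_a^T$ factor and so that $\Tr[(B_a^T)^{q^{*}}]=\Tr[B_a^{q^{*}}]$. One must also ensure the inequality $\Tr[A_a^p]\le\Tr[A_a]$ (which needs $A_a\le\id$) and the splitting $\normsub{A_a}{p}^q=\normsub{A_a}{p}^{q-p}\normsub{A_a}{p}^p$ (which needs $q\ge p$, satisfied by the optimal choice since $q/p=1/\rho\ge 1$) are invoked only in the valid regime. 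Finally, the endpoints $\rho\in\set{0,1}$ should be handled separately or by continuity: at $\rho=1$ the bound is the trivial $1$, and $\rho\to 0$ corresponds to $q\to\infty$, which is still admissible in \cref{Theorem-hypercontractivity1}.
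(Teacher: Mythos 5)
Your proof is correct, but it follows a genuinely different route from the one the paper uses for this theorem. The paper exploits the symmetry of the communication-free setting: it splits the noise evenly between the two parties via $\NEPR=(\NoiseHalfOp\otimes\NoiseHalfOp)(\Phi)$, applies Cauchy--Schwarz (i.e.\ H\"older with fixed exponents $2,2$), and then uses hypercontractivity on \emph{both} Alice's and Bob's operators with noise parameter $\sqrt{\rho}$, going from the $(1+\rho)$-norm to the $2$-norm; the min-entropy constraint $2^{-n}\Tr[P_a]\le 2^{-k}$ is invoked for both POVMs, and no optimization over exponents is needed --- the single choice $q=1+\rho$ lands exactly on $2^{-k(1-\rho)/(1+\rho)}$. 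You instead keep all the noise on Alice's side via \cref{lemma-otimes-EPR-to-times}, apply matrix H\"older with free exponents $(q,q^*)$, control Bob's factor by POVM normalization alone, apply hypercontractivity only to Alice's operators with noise parameter $\rho$, and optimize at the end (your $p=1+\rho$, $q=(1+\rho)/\rho$ checks out, as does the exponent bookkeeping collapsing to $-k(1/p-1/q)$). This is precisely the strategy the paper uses later for \cref{Theorem-ClassicalCommunicationLowerBound}; indeed, setting $t=0$ in that proof recovers your argument verbatim. What each approach buys: the paper's symmetric splitting is shorter and avoids any calculus, but relies on the factorization $\NoiseOp=\NoiseHalfOp\circ\NoiseHalfOp$ and needs the min-entropy guarantee on both outputs; your asymmetric version needs the min-entropy constraint only on Alice's output (so it proves a formally stronger statement) and unifies the communication-free bound with the communication lower-bound machinery, at the cost of an optimization over the hypercontractive exponents. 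Your flagged caveats (PSD-preservation under transpose, the regime $q\ge p$, and the endpoints $\rho\in\set{0,1}$) are all handled correctly.
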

\begin{proof}
Let $n\in\mathbb{N}$, and suppose Alice and Bob use $n$ isotropic states denoted by $\NEPRN$.
Their strategies can be described by two POVM measurements, $\set{P_a}_a$ and $\set{Q_a}_a$
respectively,
where $a$ ranges over $\set{0,1}^k$.
Then their agreement probability can be written as
\begin{equation}
  \prob{\text{Success}}=\sum_a\Tr\left[\NEPRN\left(P_a\otimes Q_a\right)\right].
\end{equation}
Note that $\NEPR=(\NoiseOp\otimes\id)(\Phi)=(\NoiseHalfOp\otimes\NoiseHalfOp)(\Phi)$, this yields
\begin{align*}
  \prob{\text{Success}}
    &=\sum_a\Tr\left[(\NoiseHalfOp\otimes\NoiseHalfOp)(\Phi)^{\otimes n}\left(P_a\otimes Q_a\right)\right] \\
    &=\sum_a\Tr\left[\EPRN\cdot\left(\NoiseHalf{P_a}\otimes\NoiseHalf{Q_a}\right)\right] \\
    &=2^{-n}\sum_a\Tr\left[\NoiseHalf{P_a}\cdot\NoiseHalf{Q_a}\right].
\end{align*}
The second equality follows from the self-adjointness of the quantum depolarizing channels.
Applying Cauchy-Schwarz inequality, we get
\begin{align}\label{eqn:succ1}
    \prob{\text{Success}}&\le\left(2^{-n}\sum_a\Tr\left[\NoiseHalf{P_a}^2\right]\right)^{\frac{1}{2}}\left(2^{-n}\sum_a\Tr\left[\NoiseHalf{Q_a}^2\right]\right)^{\frac{1}{2}}.
\end{align}
Using the hypercontractivity inequality \cref{Theorem-hypercontractivity1} with $q=1+\rho$, we have
\begin{align*}
  \sum_a2^{-n}\Tr\left[\NoiseHalf{P_a}^2\right] &\le \sum_a\left(2^{-n}\Tr\left[P_a^q\right]\right)^{\frac{2}{q}} \\
    &\le \sum_a\left(2^{-n}\Tr\left[P_a\right]\right)^{\frac{2}{q}} \\
    &= \sum_a\left(2^{-n}\Tr\left[P_a\right]\right)\left(2^{-n}\Tr\left[P_a\right]\right)^{\frac{2-q}{q}} \\
    &\le \sum_a\left(2^{-n}\Tr\left[P_a\right]\right)2^{-k\frac{2-q}{q}} \\
    &=2^{-k\frac{2-q}{q}} = 2^{-k\frac{1-\rho}{1+\rho}}.
\end{align*}
The second term in Eq.~\eqref{eqn:succ1} is upper bounded by the same quantity. Thus, we conclude the result.
\end{proof}

\section{Quantum Entanglement and Classical Communication}\label{section-classical-communication}
In this section, we consider the scenario in which Alice and Bob share
infinitely many quantum isotropic states $\Phi_\rho$,
and Alice sends classical messages to Bob.
Their task is again to produce a common random string $R\in\set{0,1}^*$ that is uniformly distributed.

In such a communication protocol,
Alice first measures her part of the shared quantum state with the outcome $(r_A,\pi)$, where $r_A$ is the randomness she outputs and $\pi$ is the message to be sent to Bob.
After receiving $\pi$, Bob does his measurement and the result $r_B$ is his output.
$r_A$ and $r_B$ should be uniformly distributed.
Alice and Bob succeed if their outputs are the same.
The whole protocol is depicted in \cref{fig:ClassicalCommunication}.

\begin{figure}
\centering
\begin{tikzpicture}
  \node[] (EPR) at (0, 0) {$\NEPRN$};
  \node[] (Alice) at (0, 1) {Alice};
  \node[] (Bob) at (0, -1) {Bob};
  \node[meter] (MA) at (3, 1) {};
  \node[meter] (MB) at (6, -1) {};
  \node[] (BO) at (8, -1) {$r_B$};
  \node[] (AO) at (MA.30 -| BO) {$r_A$};

  \draw[] (EPR.north east) -- (1, 1) -- (MA);
  \draw[] (EPR.south east) -- (MB.210 -| 1, 1) -- (MB.210);
  \draw[double] (MA.330) -- node[above] {$\pi$} (MB.150);
  \draw[double] (MA.30) -- (AO);
  \draw[double] (MB) -- (BO);
\end{tikzpicture}
\caption{Given the shared quantum state $\Phi_\rho^{\otimes n}$, Alice first measures
her part of the quantum state,
which produces $r_A$ -- Alice's output, and $\pi$ -- the message sent to Bob.
Bob then performs his measurement on his part of the shared quantum state based on $\pi$,
which produces his output $r_B$.
They succeed if $r_A$ = $r_B$.}
\label{fig:ClassicalCommunication}
\end{figure}
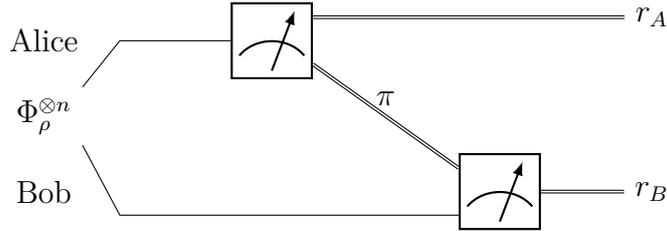

We are interested in the minimum communication cost required for Alice and Bob to produce a common random string with a given length. We show that there is no communication saving compared with the classical counterpart
in which Alice and Bob share classical correlation considered in \cite{guruswami2016tight}.

\begin{theorem}\label{Theorem-ClassicalCommunicationLowerBound}
  Let $\rho\in[0, 1]$, suppose Alice and Bob share infinitely many pairs of quantum isotropic states $\Phi_\rho$,
  and Alice is allowed to send classical messages to Bob,
  then for any $k\ge 1, \gamma\in(0, 1)$, to produce a common random string $R\in\set{0,1}^*$
  with $\Hmin{R} \ge k$, with a success probability at least $2^{-\gamma k}$,
  Alice needs to send at least
  $$\left(C(1-\gamma)-2\sqrt{C(1-C)\gamma}\right)\cdot k$$
  bits, where $C = 1-\rho^2$.
\end{theorem}

\begin{remark}
We see that the lower bound derived in the previous section matches exactly that
in the classical case, where Alice and Bob share classical states,
as considered in \cite{guruswami2016tight}.
Since they also gave a classical strategy that matches their lower bound,
that strategy also achieves the optimal rate in our case, thus our bound
in \cref{Theorem-ClassicalCommunicationLowerBound} is tight.
\end{remark}

\begin{proof}[Proof of Theorem~\ref{Theorem-ClassicalCommunicationLowerBound}]
Suppose Alice and Bob share $n$ copies of $\Phi_{\rho}$
and their task is to produce the random string $R\in\set{0,1}^*$,
using $t$ bits of classical communication.
We use $\set{P_{a, \pi}}$ to denote Alice's measurement
and use $\set{Q_{a}^\pi}$ to denote Bob's measurements,
where $a$ ranges over Alice's possible outputs and
$\pi$ ranges over $\set{0,1}^t$ -- Alice's possible messages.
Here, Alice and Bob both make their quantum measurements on their shared quantum states,
so $P_{a, \pi}\in\PSD_{2^n}$, and $Q_{a}^\pi\in\PSD_{2^n}$ for all $a$ and $\pi$.
It also holds that
$$\sum_{a, \pi}P_{a, \pi}=\id \text{ and } \forall\pi. \sum_{a}Q_a^\pi=\id.$$
Then we can write their success probability as
\begin{align}
\prob{\text{Success}}&=\sum_{a, \pi}\Tr\left[\left(P_{a, \pi}\otimes Q_{a}^\pi\right)\cdot\NEPRN\right]\\
  &=2^{-n}\sum_{a, \pi}\Tr\left[\Noise{P_{a, \pi}}\cdot Q_{a}^\pi\right],
\end{align}
where the equality is \cref{equation-otimes-EPR-to-times}.

Let $q$ and $q^*$ be positive reals satisfying $\frac{1}{q}+\frac{1}{q^*}=1$.
For each $\pi\in\set{0,1}^t$,
we apply \cref{Theorem-Holder1} the H\"{o}lder's inequality to the sequences
$\set{\Noise{P_{a, \pi}}}_a$ and $\set{Q_a^\pi}_a$ with parameters $q$ and $q^*$, which yields
\begin{equation}
\prob{\text{Success}}\le2^{-n}\sum_{\pi}\left(\Tr\left[\sum_{a}\Noise{P_{a, \pi}}^q\right]\right)^{\frac{1}{q}}\cdot\left(\Tr\left[\sum_{a}{Q_{a}^\pi}^{q^*}\right]\right)^{\frac{1}{q^*}}.
\end{equation}

For any fixed message $\pi\in\set{0,1}^t$, Bob's measurement is $\set{Q^\pi_a}_a$.
So since $q^* \ge 1$, by \cref{equation-spectral-norm-1-power-decrease} we have
$$\Tr\left[\sum_{a}{Q_{a}^\pi}^{q^*}\right]\le\Tr\left[\sum_{a}{Q_{a}^\pi}\right]=\Tr\left[\id_{2^n}\right]=2^n.$$
This reduces to
\begin{equation*}
  \prob{\text{Success}}\le2^{n/q^*-n}\sum_{\pi\in\set{0,1}^t}\left(\Tr\left[\sum_{a}\Noise{P_{a, \pi}}^q\right]\right)^{\frac{1}{q}}.
\end{equation*}
By concavity of the function $x\mapsto x^{1/q}$, we arrive at the inequality
\begin{equation}\label{eqn:succ}
  \prob{\text{Success}}\le2^{n/q^*+t/q^*-n}\left(\Tr\left[\sum_{a,\pi}\Noise{P_{a, \pi}}^q\right]\right)^{\frac{1}{q}}.
\end{equation}
We are now ready to use hypercontractivity inequalities.
Let $p$ be a positive real number satisfying $\frac{p-1}{q-1}=\rho^2$,
by \cref{Theorem-hypercontractivity1}, we have
  $$\Tr\left[\Noise{P_{a, \pi}}^q\right] = \normsub{\Noise{P_{a, \pi}}}{q}^q\le\normsub{P_{a, \pi}}{p}^q\cdot2^{n-nq/p}=\left(2^{-n}\Tr\left[P_{a, \pi}^p\right]\right)^{\frac{q}{p}}\cdot2^n.$$
Plugging this into Eq.~\eqref{eqn:succ}, and noticing $1/q+1/q^*=1$, we get
\begin{equation*}
  \prob{\text{Success}}\le2^{t/q^*}\left(\sum_{a, \pi}\left(2^{-n}\Tr\left[P_{a, \pi}^p\right]\right)^{\frac{q}{p}}\right)^{\frac{1}{q}}.
\end{equation*}
Since $p\ge1$ and $\norm{P_{a, \pi}}\le1$, we further get
\begin{equation*}
  \prob{\text{Success}}\le2^{t/q^*}\left(\sum_{a, \pi}\left(2^{-n}\Tr\left[P_{a, \pi}\right]\right)^{\frac{q}{p}}\right)^{\frac{1}{q}}.
\end{equation*}
Note that $a$ is Alice's output and has min-entropy at least $k$.
So for each string $a\in\set{0,1}^*$ and message $\pi\in\set{0,1}^t$,
the probability that Alice outputs $a$ and sends to Bob $\pi$ is at most $2^{-k}$.
Thus for all $a$ and $\pi$,
\begin{equation*}
2^{-n}\Tr\left[P_{a, \pi}\right]\le2^{-k}.
\end{equation*}
and finally
\begin{align*}
  \prob{\text{Success}}&\le2^{t/q^*}\left(\sum_{a, \pi}\left(2^{-n}\Tr\left[P_{a, \pi}\right]\right)^{\frac{q}{p}}\right)^{\frac{1}{q}}\\
  &=2^{t/q^*}\left(\sum_{a, \pi}\left(2^{-n}\Tr\left[P_{a, \pi}\right]\right)\left(2^{-n}\Tr\left[P_{a, \pi}\right]\right)^{\frac{q-p}{p}}\right)^{\frac{1}{q}}\\
  &\le2^{t/q^*-k\frac{q-p}{qp}}.
\end{align*}

Let $\delta = q-1$, then $q=1+\delta$ and $p=1+\rho^2\delta$.
From the assumption that $\prob{\text{Success}}\geq 2^{-\gamma k}$,
we get for every $\delta > 0$,
$$t\ge-\frac{(1+\delta)\gamma k}{\delta}+\frac{k(1-\rho^2)}{1+\rho^2\delta}=\left(\frac{C}{1+(1-C)\delta}-\frac{\gamma}{\delta}-\gamma\right)\cdot k$$
where $C=1-\rho^2$. Maximizing over $\delta$, we get
$$t\ge\left(C(1-\gamma)-2\sqrt{C(1-C)\gamma}\right)\cdot k.$$
\end{proof}


\section{Quantum Entanglement and Quantum Communication}\label{section-quantum-communication}
Now we consider the case where Alice can send qubits to Bob.
In this scenario, Alice and Bob share infinitely many pairs of quantum isotropic states,
and Alice can send part of her qubits to Bob after applying some local operations on
her part of the qubits.
Their goal again is to produce a common random string $R\in\set{0,1}^*$,
using as little communication as possible.
The procedure is depicted in \cref{fig:QuantumCommunication}.

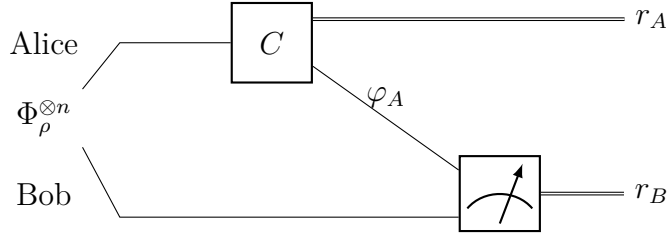
\begin{figure}
\centering
\begin{tikzpicture}
  \node[] (EPR) at (0, 0) {$\NEPRN$};
  \node[] (Alice) at (0, 1) {Alice};
  \node[] (Bob) at (0, -1) {Bob};
  \node[] (MA) at (3, 1) [draw, thick, minimum width=30, minimum height = 30] {$C$};
  \node[meter] (MB) at (6, -1) {};
  \node[] (BO) at (8, -1) {$r_B$};
  \node[] (AO) at (MA.30 -| BO) {$r_A$};

  \draw[] (EPR.north east) -- (1, 1) -- (MA);
  \draw[] (EPR.south east) -- (MB.210 -| 1, 1) -- (MB.210);
  \draw[] (MA.330) -- node[above] {$\varphi_A$} (MB.150);
  \draw[double] (MA.30) -- (AO);
  \draw[double] (MB) -- (BO);
\end{tikzpicture}
\caption{Given the shared quantum state $\Phi_\rho^{\otimes n}$, Alice first applies the quantum operation $C$
to her part of the quantum state,
which produces $r_A$ -- Alice's output, and $\varphi_A$ -- the quantum state sent to Bob.
Bob then performs a composite measurement on both quantum states,
which produces his output $r_B$.
They succeed if $r_A$ = $r_B$.
}
\label{fig:QuantumCommunication}
\end{figure}

We will see that Alice could save the communication by sending quantum messages to produce common randomness of the same length
as in the previous case. This is indeed reasonable since Alice and Bob can make use of superdense coding protocols
to optimize the classical protocol in \cite{guruswami2016tight}.
The efficiency of such protocols thus relies heavily on the rate at which we can send
classical information using a perfect quantum channel and shared quantum isotropic states.
This rate is the entanglement-assisted classical capacity considered in \cite{horodecki2001classical}.

We also get lower bound results in this model.
Suppose Alice and Bob share infinitely many $\NEPR$,
then to produce a common randomness of min-entropy $k$,
Alice needs to send to Bob at least $\Omega(\frac{1-\rho^2}{1+\rho^2}k)$ qubits.
As a result, it also gives a non-trivial upper bound on the entanglement-assisted classical capacity
of the depolarizing channels.

\subsection{The lower bound}
We will prove the following theorem.
\begin{theorem}\label{Theorem-quantum-lower-bound}
  Let $\rho\in[0, 1]$,
  suppose Alice and Bob share infinitely many quantum isotropic states $\Phi_\rho$,
  and Alice is allowed to send quantum states to Bob,
  then for any $k\ge 1, \gamma\in(0, 1)$, to produce a common random string $R\in\set{0,1}^*$
  with $\Hmin{R} \ge k$, with success probability at least $2^{-\gamma k}$,
  Alice needs to send at least
  $$\left(C-C^2\gamma-\sqrt{C(1-C^2)(2-C\gamma)\gamma}\right)\cdot k.$$
  qubits, where $C = \frac{1-\rho^2}{1+\rho^2}$.
\end{theorem}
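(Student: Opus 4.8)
The plan is to follow the template of \cref{Theorem-ClassicalCommunicationLowerBound}, the only genuinely new feature being that Alice's message now lives on a $t$-qubit quantum register $M$ rather than being a classical symbol. First I would fix the number $n$ of shared copies and describe the strategies in full generality. Alice's operation is a quantum instrument $\set{C_a}$, that is, a family of completely positive maps $C_a\colon\Matrix(A)\to\Matrix(M)$ from her $n$ qubits $A$ into the message $M$ with $\sum_a C_a$ trace preserving (equivalently, if $K_{a,i}$ are the Kraus operators of $C_a$, then $\set{K_{a,i}^\dagger K_{a,i}}$ is a POVM on $A$); Bob's operation is a single POVM $\set{Q_a}$ on the joint register $M\otimes B$ with $\sum_a Q_a=\id$ and each $\norm{Q_a}\le 1$. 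Their success probability is then
$$\prob{\text{Success}}=\sum_a\Tr\Br{Q_a\cdot\br{C_a\otimes\id_B}\br{\NEPRN}}.$$

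Next I would push all the depolarizing noise onto Bob's reference. Using the covariance identity $\NEPR=\br{\NoiseOp\otimes\id}\br{\Phi}=\br{\id\otimes\NoiseOp}\br{\Phi}$ together with the self-adjointness of $\NoiseOp$, the success probability becomes $\sum_a\Tr\Br{Q_a\,\sigma_a}$ with $\sigma_a=\br{\id_M\otimes\NoiseOp^{\otimes n}}\br{\tau_a}$ and $\tau_a=\br{C_a\otimes\id_B}\br{\EPRN}$ the (sub-normalized) Choi state of $C_a$. The point of this rewriting is that the $t$ message qubits $M$ are now \emph{noiseless} while the $n$ reference qubits $B$ carry all the noise, which is exactly the configuration of clean and noisy tensor factors for which the generalized inequality \cref{Theorem-hypercontractivity2} was proved. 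I would also record the two facts that feed the final count: $\Tr\tau_a=\prob{r_A=a}\le 2^{-k}$ by the min-entropy hypothesis, and $\sum_a\Tr\tau_a=1$.

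I would then apply the matrix H\"older inequality \cref{Theorem-Holder1} with exponents $q$ and $q^*$ to $\sum_a\Tr\Br{Q_a\sigma_a}$, keeping the noisy family $\set{\sigma_a}$ in the $q$-slot. On Bob's side, POVM completeness and $\norm{Q_a}\le 1$ give, via \cref{equation-spectral-norm-1-power-decrease}, the clean dimension factor $\Tr\Br{\sum_a Q_a^{q^*}}\le 2^{n+t}$. On the noisy side I would invoke \cref{Theorem-hypercontractivity2} with $n$ noisy qubits ($B$) and $m=t$ clean qubits ($M$), with the parameters tied by $\tfrac{p-1}{q-1}=\rho^2$; the crucial feature is that the hypercontractive factor $2^{n/q-n/p}$ depends only on $n$, so the message qubits pass through at no hypercontractive cost. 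This is precisely the mechanism that upgrades the classical rate $1-\rho^2$ to $\tfrac{1-\rho^2}{1+\rho^2}$: the $t$ message qubits behave like super-dense-coding carriers rather than plain bits. After the hypercontractive step the problem reduces to bounding $\sum_a\normsub{\tau_a}{p}^{q}$, where I would use $\normsub{\tau_a}{p}\le\Tr\tau_a$ (equivalently \cref{lemma-spectral-norm-power-decrease} with the PSD bound $\norm{\tau_a}\le\Tr\tau_a$) together with $\Tr\tau_a\le 2^{-k}$ and $\sum_a\Tr\tau_a=1$ to extract a factor of the form $2^{-\Theta(k)}$, mirroring the $\tfrac{q-p}{p}$ bookkeeping of the classical argument.

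Collecting the three dimension contributions, namely the hidden $2^{-n}$ of the isotropic identity \cref{lemma-otimes-EPR-to-times}, Bob's $2^{(n+t)/q^*}$, and the hypercontractive $2^{n/q-n/p}$, I expect the $n$-dependence to cancel and leave a bound of the shape $\prob{\text{Success}}\le 2^{c_1(q)\,t-c_2(q)\,k}$; imposing $\prob{\text{Success}}\ge 2^{-\gamma k}$, substituting $\delta=q-1$ and $p=1+\rho^2\delta$, and maximizing the resulting lower bound on $t$ over $\delta>0$ should produce the stated expression with $C=\tfrac{1-\rho^2}{1+\rho^2}$. The hard part will be precisely this dimension accounting. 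Because the register $M$ is \emph{shared} between Alice's output and Bob's measurement, a careless H\"older split charges its dimension twice and leaves an uncancelled $2^{\Theta(n)}$ factor (the bounds $\normsub{\sigma_a}{q}\le 2^{n/q-n/p}\normsub{\tau_a}{p}$ and $\normsub{\tau_a}{p}\le\Tr\tau_a$ are never simultaneously tight, so their product overshoots); the whole computation must therefore be arranged so that, beyond the isotropic $2^{-n}$, Bob's completeness is charged only for the genuinely new message dimension $2^{t}$, which is exactly what the clean-qubit form of hypercontractivity is meant to deliver. A secondary difficulty, already flagged in \cref{subsection-techniques}, is that unlike in \cite{guruswami2016tight} one cannot condition on the now-quantum message, so Alice's instrument and Bob's POVM must be manipulated as single operators throughout, which is what confines the argument to the one-way model.
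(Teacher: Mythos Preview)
Your overall architecture matches the paper's: write the success probability as $\sum_a\Tr\br{Q_a\sigma_a}$ with $\sigma_a=(\id_M\otimes\NoiseOp^{\otimes n})(\tau_a)$ and $\tau_a=(C_a\otimes\id_B)(\EPRN)$, apply matrix H\"older with exponents $(q,q^*)$, bound Bob's side by $2^{(n+t)/q^*}$, and hit $\sigma_a$ with the generalized hypercontractivity of \cref{Theorem-hypercontractivity2}. The gap is in the step that follows.

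You propose to control $\normsub{\tau_a}{p}$ via $\normsub{\tau_a}{p}\le\Tr\tau_a$ (equivalently, \cref{lemma-spectral-norm-power-decrease} with $\norm{\tau_a}\le\Tr\tau_a$). You correctly flag that this overshoots, but you then attribute the repair to the clean-qubit form of \cref{Theorem-hypercontractivity2}. That is not what happens. After \cref{Theorem-hypercontractivity2} one has $\Tr[\sigma_a^q]\le 2^{n-nq/p}\normsub{\tau_a}{p}^q$, and plugging in $\normsub{\tau_a}{p}\le\Tr\tau_a$ gives, after combining with Bob's $2^{(n+t)/q^*}$, a total exponent of $n$ equal to
\[
(n+t)/q^*+n/q-n/p \;=\; n/p^*+t/q^*.
\]
Thus a factor $2^{n/p^*}$ survives for every $p>1$, the bound is vacuous as $n\to\infty$, and no rearrangement of the H\"older split alone removes it. Hypercontractivity with a clean $t$-qubit tensor factor is necessary but not sufficient here.

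The missing ingredient is a \emph{spectral} bound on the Choi state $\tau_a$, not merely a trace bound. The paper proves separately (\cref{lemma-epr-over-channel}) that for any completely positive map $C_a:\Matrix_{2^n}\to\Matrix_{2^t}$ one has
\[
\norm{(C_a\otimes\id)(\EPRN)}\le 2^{-n+t}.
\]
This is where the communication cost $t$ enters the Alice side, and its $2^{-n}$ part is precisely what the trivial bound $\norm{\tau_a}\le\Tr\tau_a\le 2^{-k}$ lacks. Feeding it into \cref{lemma-spectral-norm-power-decrease} yields $\Tr[\tau_a^p]\le 2^{(-n+t)(p-1)}\Tr\tau_a$, and the extra $2^{-n(p-1)}$ exactly cancels the leftover $n/p^*$: the final exponent becomes $t/q^*+t/p^*-k(q-p)/(pq)$, independent of $n$. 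The optimization in $\delta=q-1$ then proceeds as you outline. So the proof rests on two ingredients working in tandem, the partial hypercontractivity \cref{Theorem-hypercontractivity2} and the operator-norm estimate \cref{lemma-epr-over-channel} for Choi states of channels with small output dimension; your plan supplies only the first.
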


We will be using the following lemma, which gives an upper bound on the spectral norm
of EPR states applied with some local channels.
\begin{lemma}\label{lemma-epr-over-channel}
  Let $n, t$ be positive integers, $\Phi$ be an EPR state and  $C: \Matrix_{2^n}\to\Matrix_{2^t}$ be a quantum channel.
  Then $$\norm{(C\otimes\id)(\Phi^{\otimes n})}\le2^{-n+t}.$$
\end{lemma}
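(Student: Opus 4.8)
The plan is to recognize $\sigma \defeq (C\otimes\id)(\Phi^{\otimes n})$ as a genuine bipartite quantum state and to control its spectral norm entirely through its reduced state on Bob's side. Since $C$ is CPTP and so is $\id$, the map $C\otimes\id$ is CPTP, and as $\Phi^{\otimes n}$ is a density operator, $\sigma\in\Density(A_{\mathrm{out}}\otimes B)$ is positive semidefinite with unit trace, where $A_{\mathrm{out}}$ is the output register of $C$ (dimension $2^t$) and $B$ is Bob's register (dimension $2^n$). The goal is then to show $\norm{\sigma}\le 2^{-n+t}$.

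First I would compute Bob's marginal. After reordering registers, $\ket{\Phi}^{\otimes n}$ is the maximally entangled state $\tfrac{1}{\sqrt{2^n}}\sum_{i\in\set{0,1}^n}\ket{i}\ket{i}$, whose $B$-marginal is $\id_{2^n}/2^n$. Because $C$ is trace preserving, tracing out $A_{\mathrm{out}}$ commutes with $C$ in the sense that $\partrace{A_{\mathrm{out}}}{\sigma}$ equals the $B$-marginal of $\Phi^{\otimes n}$ itself; hence $\sigma_B\defeq\partrace{A_{\mathrm{out}}}{\sigma} = \id_{2^n}/2^n$ is maximally mixed and $\norm{\sigma_B} = 2^{-n}$.

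The crux is a general inequality: for any positive semidefinite $\sigma_{AB}$ with reduced state $\sigma_B = \partrace{A}{\sigma_{AB}}$, one has $\norm{\sigma_{AB}}\le \dim(A)\cdot\norm{\sigma_B}$. To prove it, write the spectral decomposition $\sigma_{AB} = \sum_k \lambda_k \ketbra{\psi_k}$ with top eigenvalue $\lambda_1 = \norm{\sigma_{AB}}$, and take the Schmidt decomposition $\ket{\psi_1} = \sum_j s_j\ket{u_j}_A\ket{v_j}_B$ across the cut, which has at most $\dim(A)$ terms. Let $s_{\max}$ be the largest Schmidt coefficient; the normalization $\sum_j s_j^2 = 1$ forces $s_{\max}^2\ge 1/\dim(A)$. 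Testing $\sigma_B$ against the corresponding $\ket{v_{\max}}$ and discarding the other nonnegative eigencontributions gives $\norm{\sigma_B}\ge \bra{v_{\max}}\sigma_B\ket{v_{\max}} = \sum_k\lambda_k\norm{(\id_A\otimes\bra{v_{\max}})\ket{\psi_k}}^2\ge \lambda_1 s_{\max}^2\ge \norm{\sigma_{AB}}/\dim(A)$, which rearranges to the claim.

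Applying this with $\dim(A_{\mathrm{out}}) = 2^t$ and $\norm{\sigma_B} = 2^{-n}$ yields $\norm{\sigma}\le 2^t\cdot 2^{-n} = 2^{-n+t}$, as desired. The only delicate point is the general spectral-norm-versus-marginal inequality; equivalently one could run the whole argument through the Choi matrix $J(C) = 2^n\sigma$, whose output-marginal is $\id_{2^n}$, but either way the main thing to get right is the Schmidt-rank counting that forces $s_{\max}^2\ge 1/2^t$, since this is exactly where the factor $2^t$ enters.
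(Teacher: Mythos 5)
Your proposal is correct, and it takes a genuinely different route from the paper's proof. The paper argues variationally: since $(C\otimes\id)(\Phi^{\otimes n})$ is positive semidefinite, it suffices to bound $\Tr\left[(C\otimes\id)(\Phi^{\otimes n})\cdot\varphi\right]$ for every pure state $\varphi$; it then expands $\varphi$ over the Pauli basis of the $t$-qubit register as $\varphi=\sum_{\tau}B_\tau\otimes\varphi_\tau$, passes $C$ to its adjoint $C^\dagger$ (positive and unital), uses the EPR trace identity of \cref{lemma-otimes-EPR-to-times}, and bounds each $\normsub{\varphi_\tau}{1}\le 2^{-t}$ via trace-norm contraction under partial trace, so that the factor $2^{t}$ emerges as $4^{t}\cdot 2^{-t}$ from the sum over Pauli labels. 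You instead isolate a clean structural fact --- $\norm{\sigma_{AB}}\le\dim(A)\cdot\norm{\sigma_B}$ for any positive semidefinite $\sigma_{AB}$ --- proved by Schmidt-decomposing the top eigenvector (your counting $s_{\max}^2\ge 1/\dim(A)$ and the identity $\bra{v_{\max}}\sigma_B\ket{v_{\max}}=\sum_k\lambda_k\norm{(\id_A\otimes\bra{v_{\max}})\ket{\psi_k}}^2$ are both valid), and you feed into it only one feature of the state: its $B$-marginal is $\id_{2^n}/2^n$, because $C$ is trace preserving and the EPR marginal is maximally mixed. Your route is more modular and strictly more general: it bounds the spectral norm of an arbitrary bipartite positive semidefinite operator by its reduced state, making no use of Pauli bases, adjoint maps, or any structure of $\Phi$ beyond its marginal, and it is equivalent to the known operator bound $\sigma_{AB}\preceq\dim(A)\,\id_A\otimes\sigma_B$; it also makes completely transparent where the factor $2^{t}$ enters (Schmidt-rank counting across the cut). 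What the paper's computation buys is economy within its own toolkit --- it reuses the adjoint-map formalism and the EPR trace identity already set up for the main theorems --- at the cost of a more opaque accounting of the dimension factor.
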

\begin{proof}
  Since $(C\otimes\id)(\Phi^{\otimes n})$ is positive semidefinite,
 it suffices to prove that $\Tr\left[(C\otimes\id)(\Phi^{\otimes n})\cdot\varphi\right]\le2^{-n+t}$ for any pure state $\varphi\in\Density_{2^{n+t}}$.

  It is well known that the Pauli matrices along with the identity matrix form an orthonormal basis in $\Matrix_2$.
  We write them as $\mathcal{B} = \set{B_i}_{i\in\set{0, 1, 2, 3}}$ with
  $$B_0 = \id, \ B_1=\begin{pmatrix}0 & 1\\1 & 0\end{pmatrix}, \ B_2=\begin{pmatrix}0 &-\text{i}\\\text{i} & 0\end{pmatrix}, \ B_3=\begin{pmatrix}1 & 0\\0 & -1\end{pmatrix}.$$
  Then $$\set{B_\sigma=\bigotimes_{i=1}^nB_{\sigma_i}}_{\sigma\in\set{0,1,2,3}^n}$$
  form an orthonormal basis in $\Matrix_{2^n}$.
  Then we can decompose the state $\varphi$ as
  $$\varphi = \sum_{\sigma\in\set{0,1,2,3}^{n+t}}c_\sigma B_\sigma=\sum_{\sigma_1\in\set{0,1,2,3}^t}\sum_{\sigma_2\in\set{0,1,2,3}^n}c_{\sigma_1\sigma_2}B_{\sigma_1}\otimes B_{\sigma_2}.$$
  For $\tau\in\set{0,1,2,3}^t$, if we let $$\varphi_\tau\defeq\sum_{\sigma_2\in\set{0,1,2,3}^n}c_{\tau\sigma_2}B_{\sigma_2}$$
  then
  $$\varphi = \sum_{\tau\in\set{0,1,2,3}^t}B_\tau\otimes\varphi_\tau.$$
  We can see that $\varphi_\tau = 2^{-t}\Tr_1\left[(B_\tau\otimes\id)\cdot\varphi\right]$.
  Then
  \begin{align*}
    \Tr\left[(C\otimes\id)(\Phi^n)\cdot\varphi\right] &= \Tr\left[\Phi^n\cdot(C^\dagger\otimes\id)(\varphi)\right] \\
                              &= \sum_{\tau\in\set{0,1,2,3}^t}\Tr\left[\Phi^n\cdot(C^\dagger(B_\tau)\otimes\varphi_\tau)\right] \\
                              &= 2^{-n}\sum_{\tau\in\set{0,1,2,3}^t}\Tr\left[C^\dagger(B_\tau)\cdot\varphi_\tau\right] \\
                              &\le 2^{-n}\sum_{\tau\in\set{0,1,2,3}^t}\Tr\left[C^\dagger(\id^t)\cdot\abs{\varphi_\tau}\right] \\
                              &= 2^{-n}\sum_{\tau\in\set{0,1,2,3}^t}\normsub{\varphi_\tau}{1}. \\
  \end{align*}
  The inequality follows since $B_\tau\preceq\id$ for any $\tau$, and $C^\dagger$ is a positive map.
  We have
  $$\normsub{\varphi_\tau}{1}=2^{-t}\normsub{\Tr_1\left[(B_\tau\otimes\id)\cdot\varphi\right]}{1}\le2^{-t}\normsub{(B_\tau\otimes\id)\cdot\varphi}{1}=2^{-t}.$$
  So $$\Tr\left[(C\otimes\id)(\Phi^n)\cdot\varphi\right]\le2^{-n}\sum_{\tau\in\set{0,1,2,3}^t}2^{-t}=2^{-n+t}.$$
\end{proof}

\begin{proof}[Proof of \cref{Theorem-quantum-lower-bound}]
In our communication model,
Alice first performs a quantum operation on his part of the shared EPR states,
which produces a classical register $r_A$ and a quantum register $\varphi_A$.
The classical register $r_A$ is taken as Alice's output,
while the quantum part $\varphi_A$ is sent to Bob.
Bob then performs a quantum measurement and gets his output $r_B$.
They succeed if their output $r_A$ and $r_B$ are the same.

Suppose Alice and Bob share $n$ copies quantum isotropic states $\NEPRN$.
They wish to produce a common random string $R\in\set{0,1}^*$ with min-entropy $k$
by transmitting $t$ qubits.
Alice's quantum channel is $C: \Matrix_{2^n}\to\set{0,1}^*\times\Matrix_{2^t}$.
Since $C$ always produces a classical-quantum state,
we can write $C$ as the composition of several ``sub-channels'' $C_a: \Matrix_{2^n}\to\Matrix_{2^t}$.
That is
\begin{equation}
C(\varphi)=\sum_{a}\ketbra{a}\otimes C_a(\varphi).
\end{equation}
Here, each $C_a$ is a completely positive map,
and $C_a(\varphi)$ is the scaled post-measurement state if the measurement outcome is $a$.
That is, with probability $\Tr\left[C_a(\varphi)\right]$,
Alice will output $a$ and send the quantum state $\frac{C_a(\varphi)}{\Tr\left[C_a(\varphi)\right]}$ to Bob.

Since $\Tr_B\NEPRN=\id_{2^n}/2^n$ and Alice's classical output $a$ has min-entropy $k$, for each $a\in\set{0, 1}^*$,
we have
\begin{equation}\label{equation-min-entropy}
\Tr\left[C_a(\id_{2^n}/2^n)\right] \le 2^{-k}.
\end{equation}

Bob's quantum measurement is $\set{Q_a}_a$ where $Q_a\in\PSD_{2^{n+t}}$ and $\sum_aQ_a=\id_{2^{n+t}}$.
Then the success probability can be written as
\begin{align*}
\prob{\text{Success}} &= \sum_a\Tr\left[\left(C_a\otimes\id\right)(\NEPRN)\cdot Q_a\right] \\
  &\le\left(\sum_a\Tr\left[(C_a\otimes\id)(\NEPRN)^q\right]\right)^{\frac{1}{q}}\left(\sum_a\Tr\left[Q_a^{q^*}\right]\right)^{\frac{1}{q^*}}.
\end{align*}
where $q$ and $q^*$ are positive real numbers satisfying $\frac{1}{q}+\frac{1}{q^*}=1$,
and we have applied \cref{Theorem-Holder1}.
Since $Q_a$ are measurement operators, and $q^*\ge1$, we have
$$\sum_a\Tr\left[Q_a^{q^*}\right]\le\sum_a\Tr\left[Q_a\right]=2^{n+t}.$$
So we get
 $$\prob{\text{Success}}\le2^{(n+t)/q^*}\left(\sum_a\Tr\left[(C_a\otimes\id)(\NEPRN)^q\right]\right)^{\frac{1}{q}}.$$
Applying \cref{Theorem-hypercontractivity2}, we get
\begin{equation}
\Tr\left[(C_a\otimes\id)(\NEPRN)^q\right] = \Tr\left[(C_a\otimes\NoiseOp)(\EPRN)^q\right]
  \le 2^n\left(2^{-n}\Tr\left[(C_a\otimes\id)(\EPRN)^p\right]\right)^{\frac{q}{p}}.
\end{equation}
By \cref{lemma-epr-over-channel}, we have $\norm{(C_a\otimes\id)(\EPRN)}\le2^{-n+t}$,
so by \cref{lemma-spectral-norm-power-decrease}, we get
\begin{equation}
  \Tr\left[(C_a\otimes\id)(\EPRN)^p\right]\le2^{(-n+t)(p-1)}\Tr\left[(C_a\otimes\id)(\EPRN)\right]
\end{equation}
and thus

\begin{align*}
  \Tr\left[(C_a\otimes\id)(\NEPRN)^q\right]&\le 2^n\left(2^{-np+tp-t}\Tr\left[(C_a\otimes\id)(\Phi^n)\right]\right)^{\frac{q}{p}} \\
  &= 2^{n-nq+tq-tq/p}\left(\Tr\left[C_a(\id/2^n)\right]\right)^{\frac{q}{p}} \\
  &\le 2^{n-nq+tq-tq/p-k(q-p)/p}\Tr\left[C_a(\id/2^n)\right].
\end{align*}
The last inequality follows since $\Tr\left[C_a(\id/2^n)\right]\le2^{-k}$(\cref{equation-min-entropy}).
Plugging this into the above, we get
\begin{equation}
\prob{\text{Success}}\le2^{t/q^*+t/p^*-k(q-p)/pq}\left(\sum_a\Tr\left[C_a(\id/2^n)\right]\right)^{\frac{1}{q}}=2^{t/q^*+t/p^*-k\frac{(q-p)}{pq}}.
\end{equation}
If we set $q=1+\delta$ and $p=1+\rho^2\delta$,
and fixing the success probability to be $2^{-\gamma k}$, we get
$$t\ge\frac{(1-\rho^2)\delta-\gamma(1+\delta+\rho^2\delta+\rho^2\delta^2)}{(1+\rho^2)\delta+2\rho^2\delta^2}\cdot k.$$
Let $C=\frac{1-\rho^2}{1+\rho^2}$.
Maximizing over $\delta$ yields
$$t\ge\left(C-C^2\gamma-\sqrt{C(1-C^2)(2-C\gamma)\gamma}\right)\cdot k.$$
This proves \cref{Theorem-quantum-lower-bound}.
\end{proof}

\subsection{The upper bound}
\cite{guruswami2016tight} considered the classical case of our problem:
Alice and Bob share $\rho$-correlated classical strings,
and classical communication is allowed.
Their classical strategy can produce a CR of min-entropy $k$,
with only $(1-\rho^2)k$ bits of classical communication.

Our quantum lower bound in the last subsection is lower:
to produce a CR of min-entropy $k$,
at least only $\frac{1-\rho^2}{1+\rho^2}k$ qubits of communication is needed.
And indeed, we can do better than the classical protocol in the quantum case.
This is achieved by improving the classical protocol in \cite{guruswami2016tight} by using superdense coding with shared noisy entanglement.

Suppose Alice and Bob share $\Phi_\rho$,
then \cite{hiroshima2001optimal, bowen2001classical} gave protocols
that transmit up to $2-H(\Phi_\rho)$ classical bits per qubit.
When $2-H(\Phi_\rho) > 1$, instead of sending $(1-\rho^2)k$ classical bits to Bob,
Alice can use superdense coding and send only $\frac{(1-\rho^2)k}{2-H(\Phi_\rho)}$ qubits.
Thus we achieve a rate of $C = \frac{1-\rho^2}{\max\left\{1, 2-H(\Phi_\rho)\right\}}$
which can be better than classical.

We can also give an upper bound on this channel capacity.
\begin{definition}[Entanglement assisted classical capacity, \cite{horodecki2001classical}]\label{definition-capacity}
  Let $\sigma$ be a quantum state and $C$ be a quantum channel,
  then the classical capacity of $C$ assisted by $\sigma$ is defined as the
  maximal number of bits that Alice can send to Bob for each qubit transmission,
  when Alice and Bob share infinitely many copies of $\sigma$.
\end{definition}
This channel capacity is at most $1+\rho^2$ for quantum isotropic states.
\begin{cor}
For $\rho\in[0, 1]$, the classical capacity of the noiseless identity channel
assisted by states $\NEPR$ is upper bounded by $1+\rho^2$.
\end{cor}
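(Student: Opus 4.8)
The plan is to derive the corollary as a direct consequence of the quantum communication lower bound in \cref{Theorem-quantum-lower-bound}, by contraposition: a good capacity means cheap classical communication via superdense coding, and if the capacity were too large, we could produce common randomness with fewer qubits than the lower bound permits.

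First I would make the link between channel capacity and our CR-generation problem precise. Suppose the entanglement-assisted classical capacity of the noiseless identity channel assisted by $\NEPR$ is some value $c$. By \cref{definition-capacity}, this means that, using many shared copies of $\NEPR$, Alice can reliably transmit $c$ classical bits to Bob per qubit of communication. Now take the tight classical protocol of \cite{guruswami2016tight}, which produces a common random string of min-entropy $k$ using $(1-\rho^2)k + o(k)$ bits of classical communication (with success probability approaching $1$). Instead of sending these classical bits directly, Alice encodes them through the capacity-achieving superdense-coding scheme, so that the required number of \emph{qubits} sent is $\frac{(1-\rho^2)k}{c} + o(k)$. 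This yields a quantum protocol producing a min-entropy-$k$ common random string whose qubit cost is governed by $c$.

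Next I would invoke the converse. \cref{Theorem-quantum-lower-bound}, taken in the high-success regime $\gamma \to 0$, forces the number of transmitted qubits to be at least $\left(C + o(1)\right)k = \frac{1-\rho^2}{1+\rho^2}k + o(k)$, where $C = \frac{1-\rho^2}{1+\rho^2}$. Comparing this lower bound against the qubit cost $\frac{(1-\rho^2)k}{c} + o(k)$ achieved by the superdense-coding protocol, we must have
\begin{equation*}
\frac{1-\rho^2}{c} \ge \frac{1-\rho^2}{1+\rho^2},
\end{equation*}
which, since $1-\rho^2 \ge 0$, rearranges to $c \le 1+\rho^2$. This is exactly the claimed bound.

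The main obstacle is making the reduction airtight at the level of the $o(k)$ error terms and the success probability, rather than in the informal rate statements. One must check that the capacity-achieving encoding can be composed with the classical protocol without degrading the min-entropy guarantee or the success probability below the threshold required to apply \cref{Theorem-quantum-lower-bound}; in particular the $\gamma \to 0$ limit of the lower bound must be taken carefully so that the additive lower-order terms vanish in the rate comparison. Once the asymptotic bookkeeping is handled, the inequality $c \le 1+\rho^2$ follows immediately from dividing out the common factor $1-\rho^2$ (with the degenerate case $\rho = 1$, where both sides are trivial, treated separately).
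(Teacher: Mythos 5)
Your proposal is correct and follows essentially the same route as the paper's own proof: compose the tight classical protocol of \cite{guruswami2016tight} with capacity-achieving superdense coding to get a quantum protocol using $\frac{(1-\rho^2)k}{c}$ qubits, then invoke \cref{Theorem-quantum-lower-bound} (in the vanishing-error regime) to conclude $c \le 1+\rho^2$. The paper phrases this as a contradiction with an explicit $\epsilon$ rather than a direct rate comparison, but the ingredients and logic are identical; your extra attention to the $o(k)$ bookkeeping and the $\rho=1$ degenerate case is, if anything, more careful than the paper's treatment.
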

\begin{proof}
Suppose, on the contrary, there exists an $\epsilon > 0$ such that
the capacity is $1+\rho^2+\epsilon$.
We know there exists a purely classical protocol that uses $\rho$-correlated random strings
and generates a CR of min-entropy $k$ using only $k\cdot(1-\rho^2)$ bits of classical communication.
Since $\NEPR$ can also be used as $\rho$-correlated random strings,
and we can send up to $1+\rho^2+\epsilon$ bits of classical information using one qubit transmission,
we only need to send $k\cdot\frac{1-\rho^2}{1+\rho^2+\epsilon}$ qubits to agree on a CR of min-entropy $k$, with vanishing error probability.
This contradicts \cref{Theorem-quantum-lower-bound}.
\end{proof}

\section{Discussion}


In \cref{section-quantum-communication}, we studied the problem of entanglement-assisted CR generation problem with quantum communication.
To produce a CR with min-entropy $k$, our communication lower bound is $\frac{1-\rho^2}{1+\rho^2}\cdot k$
while our upper bound is $\frac{1-\rho^2}{\max\set{1, 2-H(\Phi_\rho)}}\cdot k$.
The exact communication cost is something in the middle and remains unknown.

Also, due to technical reasons, we limited our research to one-way communication from Alice to Bob.
Two-way communication may reduce the communication cost, as is in the entanglement purification problem \cite{bennett1996purification}.

Finally, we are also interested in other noise models, e.g., the erasure channel,
which has been considered in \cite{guruswami2016tight}.
We can define the \textit{quantum erasure channel} (QEC) $D_{\epsilon}: \Matrix_2\to\Matrix_3$ as
$$D_{\epsilon}(\rho) = (1-\epsilon)\rho + \epsilon \ketbra{E}$$
where $\ket{E}$ is the quantum state that indicates an error.
Then we can consider the scenario in which Alice and Bob share many copies of the state $(\id\otimes D_{\epsilon})(\Phi)$.
To get lower bound results on both the classical and quantum communication cost,
we will need hypercontractivity inequalities regarding the quantum erasure channel,
which itself is an independent problem.
Hypercontractivity inequalities for the classical case have been proved in \cite{nair2016evaluating}.


\subsection*{Acknowledgment}

We would like to thank Debbie Leung for the correspondence.
We thank Mark M. Wilde for pointing out their work~\cite{10005080,10161613,nuradha2023fidelitybased}.
This work was supported by National Natural Science Foundation of China (Grant No. 62332009, 61972191) and Innovation Program for Quantum Science and Technology (Grant No. 2021ZD0302900).

\bibliographystyle{alpha}
\bibliography{references}

\newcommand{\etalchar}[1]{$^{#1}$}
\begin{thebibliography}{LRWW23}

\bibitem[AC93]{ahlswede1993common}
Rudolf Ahlswede and Imre Csisz{\'a}r.
\newblock Common randomness in information theory and cryptography. i. secret
  sharing.
\newblock {\em IEEE Transactions on Information Theory}, 39(4):1121--1132,
  1993.

\bibitem[AC98]{ahlswede1998common}
Rudolf Ahlswede and Imre Csisz{\'a}r.
\newblock Common randomness in information theory and cryptography. ii. cr
  capacity.
\newblock {\em IEEE Transactions on Information Theory}, 44(1):225--240, 1998.

\bibitem[Aud09]{audenaert09}
K.~Audenaert.
\newblock A note on the p->q norms of 2-positive maps.
\newblock {\em Linear Algebra and Its Applications}, 430:1436--1440, 01 2009.

\bibitem[BBP{\etalchar{+}}96]{bennett1996purification}
Charles~H Bennett, Gilles Brassard, Sandu Popescu, Benjamin Schumacher, John~A
  Smolin, and William~K Wootters.
\newblock Purification of noisy entanglement and faithful teleportation via
  noisy channels.
\newblock {\em Physical review letters}, 76(5):722, 1996.

\bibitem[BDSW96]{bennett1996mixed}
Charles~H Bennett, David~P DiVincenzo, John~A Smolin, and William~K Wootters.
\newblock Mixed-state entanglement and quantum error correction.
\newblock {\em Physical Review A}, 54(5):3824, 1996.

\bibitem[BM11]{bogdanov2011extracting}
Andrej Bogdanov and Elchanan Mossel.
\newblock On extracting common random bits from correlated sources.
\newblock {\em IEEE Transactions on information theory}, 57(10):6351--6355,
  2011.

\bibitem[Bow01]{bowen2001classical}
Garry Bowen.
\newblock Classical information capacity of superdense coding.
\newblock {\em Physical Review A}, 63(2):022302, 2001.

\bibitem[BSST99]{bennett1999entanglement}
Charles~H Bennett, Peter~W Shor, John~A Smolin, and Ashish~V Thapliyal.
\newblock Entanglement-assisted classical capacity of noisy quantum channels.
\newblock {\em Physical Review Letters}, 83(15):3081, 1999.

\bibitem[BSST02]{bennett2002entanglement}
Charles~H Bennett, Peter~W Shor, John~A Smolin, and Ashish~V Thapliyal.
\newblock Entanglement-assisted capacity of a quantum channel and the reverse
  shannon theorem.
\newblock {\em IEEE transactions on Information Theory}, 48(10):2637--2655,
  2002.

\bibitem[BW92]{PhysRevLett.69.2881}
Charles~H. Bennett and Stephen~J. Wiesner.
\newblock Communication via one- and two-particle operators on
  einstein-podolsky-rosen states.
\newblock {\em Phys. Rev. Lett.}, 69:2881--2884, Nov 1992.

\bibitem[CHSH69]{clauser1969proposed}
John~F Clauser, Michael~A Horne, Abner Shimony, and Richard~A Holt.
\newblock Proposed experiment to test local hidden-variable theories.
\newblock {\em Physical review letters}, 23(15):880, 1969.

\bibitem[CN04]{csiszar2004secrecy}
Imre Csisz{\'a}r and Prakash Narayan.
\newblock Secrecy capacities for multiple terminals.
\newblock {\em IEEE Transactions on Information Theory}, 50(12):3047--3061,
  2004.

\bibitem[DKQ{\etalchar{+}}23]{10005080}
Dawei Ding, Sumeet Khatri, Yihui Quek, Peter~W. Shor, Xin Wang, and Mark~M.
  Wilde.
\newblock Bounding the forward classical capacity of bipartite quantum
  channels.
\newblock {\em IEEE Transactions on Information Theory}, 69(5):3034--3061,
  2023.

\bibitem[DW04]{devetak2004distilling}
Igor Devetak and Andreas Winter.
\newblock Distilling common randomness from bipartite quantum states.
\newblock {\em IEEE Transactions on Information Theory}, 50(12):3183--3196,
  2004.

\bibitem[DW05]{devetak2005distillation}
Igor Devetak and Andreas Winter.
\newblock Distillation of secret key and entanglement from quantum states.
\newblock {\em Proceedings of the Royal Society A: Mathematical, Physical and
  engineering sciences}, 461(2053):207--235, 2005.

\bibitem[EPR35]{Einstein35}
A.~Einstein, B.~Podolsky, and N.~Rosen.
\newblock Can quantum-mechanical description of physical reality be considered
  complete?
\newblock {\em Phys. Rev.}, 47:777--780, May 1935.

\bibitem[GR16]{guruswami2016tight}
Venkatesan Guruswami and Jaikumar Radhakrishnan.
\newblock Tight bounds for communication-assisted agreement distillation.
\newblock In {\em 31st Conference on Computational Complexity (CCC 2016)}.
  Schloss Dagstuhl-Leibniz-Zentrum fuer Informatik, 2016.

\bibitem[HHH{\etalchar{+}}01]{horodecki2001classical}
Michal Horodecki, Pawel Horodecki, Ryszard Horodecki, Debbie Leung, and Barbara
  Terhal.
\newblock Classical capacity of a noiseless quantum channel assisted by noisy
  entanglement.
\newblock {\em arXiv preprint quant-ph/0106080}, 2001.

\bibitem[Hir01]{hiroshima2001optimal}
Tohya Hiroshima.
\newblock Optimal dense coding with mixed state entanglement.
\newblock {\em Journal of Physics A: Mathematical and General}, 34(35):6907,
  2001.

\bibitem[Kin14]{king2014hypercontractivity}
Christopher King.
\newblock Hypercontractivity for semigroups of unital qubit channels.
\newblock {\em Communications in Mathematical Physics}, 328(1):285--301, 2014.

\bibitem[LRWW23]{10161613}
Ludovico Lami, Bartosz Regula, Xin Wang, and Mark~M. Wilde.
\newblock Upper bounds on the distillable randomness of bipartite quantum
  states.
\newblock In {\em 2023 IEEE Information Theory Workshop (ITW)}, pages 203--208,
  2023.

\bibitem[Mau93]{maurer1993secret}
Ueli~M Maurer.
\newblock Secret key agreement by public discussion from common information.
\newblock {\em IEEE transactions on information theory}, 39(3):733--742, 1993.

\bibitem[MGKB20]{murta2020quantum}
Gl{\'a}ucia Murta, Federico Grasselli, Hermann Kampermann, and Dagmar Bru{\ss}.
\newblock Quantum conference key agreement: A review.
\newblock {\em Advanced Quantum Technologies}, 3(11):2000025, 2020.

\bibitem[NW16]{nair2016evaluating}
Chandra Nair and Yan~Nan Wang.
\newblock Evaluating hypercontractivity parameters using information measures.
\newblock In {\em 2016 IEEE International Symposium on Information Theory
  (ISIT)}, pages 570--574. IEEE, 2016.

\bibitem[NW23]{nuradha2023fidelitybased}
Theshani Nuradha and Mark~M Wilde.
\newblock Fidelity-based smooth min-relative entropy: Properties and
  applications.
\newblock {\em arXiv preprint arXiv:2305.05859}, 2023.

\bibitem[Ras12]{rastegin2012relations}
Alexey~E Rastegin.
\newblock Relations for certain symmetric norms and anti-norms before and after
  partial trace.
\newblock {\em Journal of Statistical Physics}, 148(6):1040--1053, 2012.

\bibitem[SA13]{shebrawi_albadawi_2013}
Khalid Shebrawi and Hussien Albadawi.
\newblock Trace inequalities for matrices.
\newblock {\em Bulletin of the Australian Mathematical Society},
  87(1):139–148, 2013.

\bibitem[STW20]{8863950}
Madhu Sudan, Himanshu Tyagi, and Shun Watanabe.
\newblock Communication for generating correlation: A unifying survey.
\newblock {\em IEEE Transactions on Information Theory}, 66(1):5--37, 2020.

\bibitem[SW21]{salek2021multi}
Farzin Salek and Andreas Winter.
\newblock Multi-user distillation of common randomness and entanglement from
  quantum states.
\newblock {\em IEEE Transactions on Information Theory}, 68(2):976--988, 2021.

\bibitem[Wat05]{notessp05}
John Watrous.
\newblock Notes on super-operator norms induced by schatten norms.
\newblock {\em Quantum Info. Comput.}, 5(1):58–68, jan 2005.

\bibitem[Wat18]{watrous2018theory}
John Watrous.
\newblock {\em The theory of quantum information}.
\newblock Cambridge university press, 2018.

\end{thebibliography}

\appendix

\end{document}